\newtheorem{theorem}{\bf Theorem}
\newtheorem{lemma}{\bf Lemma}
\newtheorem{example}{\bf Example}
\newtheorem{remark}{\bf Remark}
\date{}
\newtheorem{df}{Definition}[section]
\newtheorem{corollary}{Corollary}[section]
\journal{Discrete Math.}
\begin{document}

\begin{frontmatter}

\title{Cyclic codes over a non-chain ring $R_{e,q}$ and their application to LCD codes}

%% Group authors per affiliation:
% \author{Habibul Islam and Om Prakash}
% \address{Radarweg 29, Amsterdam}
% \fntext[myfootnote]{Since 1880.}

%% or include affiliations in footnotes:
\author{Habibul Islam}
\address{Department of Mathematics, Indian Institute of Technology Patna, India}
\ead{habibul.pma17@iitp.ac.in}

\author{Edgar Mart\'inez-Moro}
\address{Institute of Mathematics, University of Valladolid, Spain}
\ead{edgar.martinez@uva.es}

\author{Om Prakash\corref{mycorrespondingauthor}}
\address{Department of Mathematics, Indian Institute of Technology Patna, India}
\cortext[mycorrespondingauthor]{Corresponding author}
\ead{om@iitp.ac.in}

\begin{abstract}
Let $\mathbb{F}_q$ be a finite field of order $q$, a prime power integer such that $q=et+1$ where $t\geq 1,e\geq 2$ are integers. In this paper, we study cyclic codes of length $n$ over a non-chain ring $R_{e,q}=\mathbb{F}_q[u]/\langle u^e-1\rangle$. We define a Gray map $\varphi$ and obtain many { maximum-distance-separable} (MDS) and optimal $\mathbb{F}_q$-linear codes from the Gray images of cyclic codes. Under certain conditions we determine { linear complementary dual} (LCD) codes of length $n$ when $\gcd(n,q)\neq 1$ and $\gcd(n,q)= 1$, respectively. It is proved that { a} cyclic code $\mathcal{C}$ of length $n$ is an LCD code if and only if its Gray image $\varphi(\mathcal{C})$ is an LCD code of length $4n$ over $\mathbb{F}_q$. Among others, we present the conditions for existence of free and non-free LCD codes. Moreover, we obtain many optimal LCD codes as the Gray images of non-free LCD codes over $R_{e,q}$.
\end{abstract}

\begin{keyword}
Cyclic code \sep LCD code \sep Optimal code \sep MDS code \sep self-dual code \sep Gray map.
\MSC[2010] 94B15\sep 94B05\sep 94B60.
\end{keyword}

\end{frontmatter}

%\linenumbers

\section{Introduction}
Cyclic codes are one of the most important classes of linear codes and they possess rich algebraic properties. { A} cyclic code of length $n$ over a ring (or field) $R$ is defined as an $R$-submodule (resp. subspace) of $R^n$ closed under the cyclic shift. Over a finite field $\mathbb{F}_q$ they can be easily classified via ideals of the ring $\mathbb{F}_q[x]/\langle x^n-1\rangle$ but usually their structure is non-trivial for the case of finite rings. The algebraic structure such as generator polynomials, generator matrices are useful to obtain their parameters like rank or minimum distances. In the field case the standard notation $[n,k,d]$ is used to represent a linear code of length $n$, dimension $k$ and distance $d$. { There are many well-known theoretical relations among these parameters, and some of them are Singleton bound, Plotkin bound, Gilbert-Varshamov bound, Hamming bound and Griesmer bound, etc}. A code { attaining} any of these bounds is called maximum-distance-separable (MDS) for that bound. In 1997, Kanwar and Lopez-Permouth \cite{Kanwar97} determined the structure of cyclic codes over $\mathbb{Z}_{p^m}$ while Pless et al. \cite{Pless97} obtained self-dual cyclic codes over $\mathbb{Z}_4$.  Later, in 1999, Bonnecaze and Udaya \cite{Bonnecaze99} studied cyclic codes over the ring $\mathbb{F}_2+u\mathbb{F}_2,u^2=0$. Also, Blackford \cite{Blackford03} classified all cyclic codes of length $2n$ ($n$ is odd integer) over $\mathbb{Z}_4$. After that,  Abualrub et al. \cite{Abualrub04} discussed these codes of length $2^e$ over $\mathbb{Z}_4$. Also, in a more general setup, cyclic and negacyclic codes over finite chain rings are extensively studied in \cite{Dinh04}. Later, many papers contributed significant tools and results to obtain the structural properties of these codes over finite rings, see \cite{Abualrub07,Bayram14,Gao15,Islam19,Islam18,Islam21,Zhu10}.

Linear complementary dual (shortly LCD) codes were introduced by Massey \cite{Massey92}, in 1992. An LCD code is defined { as} a linear code having a trivial intersection with its dual code. These codes have gained serious attention due to their recent successful application in cryptography \cite{Carlet16}. In 2016, Carlet and Guilley \cite{Carlet16} shown that LCD codes with possibly large minimum distances prevent the resistance against side-channel attacks in a cryptosystem. In addition, these codes have a much simpler nearest-codeword decoding algorithm than linear codes, and { are} widely used in communications systems, data storage, and consumer electronics. In 1994, Yang and Massey \cite{Yang94} completely classified cyclic LCD codes over finite fields. Later,  Sendrier \cite{Sendrier04} shown that LCD codes attain the Gilbert-Varshamov bound. In 2009 Esmaeili and Yari \cite{Esmaeili09} studied  quasi-cyclic LCD codes. Later, Li et al. \cite{Li17} obtained many optimal and good LCD codes over finite fields. Also, Dougherty et al. \cite{Dougherty17} discussed LCD codes along with the linear programming bounds. Further, Galvez et al. \cite{Galvez18} determined many bounds on LCD codes over $\mathbb{F}_2$.

Recently, LCD codes over finite chain rings are extensively studied in \cite{Durgun20,Liu15,Liu19}. They have determined necessary and sufficient conditions of a linear code to be an LCD code and show the existence of asymptotically good LCD codes over finite chain rings. Therefore, it is still open to investigate the structure of LCD codes over finite commutative non-chain rings. Very recently, these codes have been explored over a non-chain ring $\mathbb{F}_q+u\mathbb{F}_q+v\mathbb{F}_q$ by Yadav et al. \cite{Yadav21}. Here, we consider a class of finite commutative non-chain rings with unity $R_{e,q}=\mathbb{F}_q[u]/\langle u^e-1\rangle$ where $q=et+1$ and $e\geq 2,t\geq 1$ are integers. Then  $R_{e,q}$ is a semi-local Frobenius ring of order $q^e$. The main concern of this paper is to obtain the structure of cyclic and LCD codes of length $n$ over $R_{e,q}$ and to construct optimal linear (or LCD) codes from them.  With the help of a new Gray map $\varphi$, we derive many optimal and MDS codes from cyclic codes over $R_{e,q}$. Further, we obtain necessary and sufficient conditions of cyclic codes to LCD codes. Finally, many optimal LCD codes are presented to validate our obtained results.

\section{Preliminary}\label{sec 2}
Let $p$ be an odd prime and $q=p^m$ such that $q=et+1$ where $t,m\geq 1,e\geq 2$ are integers. Let $\mathbb{F}_{q}$  be a field of order $q$ and denote $R_{e,q}=\mathbb{F}_{q}[u]/\langle u^{e}-1\rangle$. Then it is easy to check that $R_{e,q}$ is a finite commutative semi-local, Frobenius and non-chain ring with unity and characteristic $p$. Also $R_{e,q}$ can be written as  $R_{e,q}=\mathbb{F}_{q}+u\mathbb{F}_{q}+u^{2}\mathbb{F}_{p}+\dots+u^{e-1}\mathbb{F}_q$, where $u^{e}=1$, thus an arbitrary element $r\in R_{e,q}$ can be written as $r=a_0+ua_1+u^2a_2+\dots +u^{e-1}a_{e-1}$, where $a_i\in \mathbb{F}_{q}$ for $0\leq i\leq e-1$. Recall that a nonempty subset $\mathcal{C}\subseteq R^{n}_{e,q}$ is said to be a \textit{linear code} of length $n$ if it is an $R_{e,q}$-submodule of $R^{n}_{e,q}$ and elements in  $\mathcal{C}$ are called codewords. Also, the rank of a code $\mathcal{C}$ is the minimum number of generators for $\mathcal{C}$ and the free rank is the rank of $\mathcal{C}$ when it is a free module over $R_{e,q}$. A typical linear code of length $n$ and rank $k$ is denoted by $[n,k]$. The inner product between two vectors $x=(x_{0},x_{1},\cdots,x_{n-1}), y=(y_{0},y_{1},\cdots,y_{n-1})\in R_{e,q}^n$ is defined as $x\cdot y=\sum_{i=0}^{n-1}x_{i}y_{i}$. The orthogonal of a linear code $\mathcal{C}$ is denoted by $\mathcal{C}^{\perp}$ and defined by $\mathcal{C}^{\perp}=\{x\in R_{e,q}^{n}\mid x\cdot y=0,~\forall~y\in \mathcal{C} \}$. Note that the orthogonal code $\mathcal{C}^{\perp}$ is also a linear code of length $n$ over $R_{e,q}$.  The linear code $\mathcal{C}$ is said to be \textit{self-orthogonal} if $\mathcal{C}\subseteq \mathcal{C}^{\perp}$ and \textit{self-dual} if $\mathcal{C}=\mathcal{C}^{\perp}$.\\
Since $e\mid (q-1)$, the polynomial $u^e-1$ uniquely splits into linear factors over the ring $\mathbb{F}_q[u]$. Suppose
\begin{align*}
    u^{e}-1=\prod_{i=1}^e(u-\alpha_i) \in \mathbb{F}_{q}[u].
\end{align*}
Let $G_{i}=u-\alpha_{i}$ and $\widehat{G}_{i}=\frac{(u^{e}-1)}{G_{i}}$, for $1\leq i\leq e$. Since $\gcd(G_{i}, \widehat{G}_{i})=1$, there exist $z_{i},h_{i}\in \mathbb{F}_{q}[u]$ such that $z_{i}G_{i}+h_{i}\widehat{G}_{i}=1$, for $1\leq i\leq e$.
Let $\mu_{i}=h_{i}\widehat{G}_{i}$, for $1\leq i\leq e$.
\begin{lemma}\label{crt}
Assuming the same notations discussed above { and mod $(u^e-1)$}, we have $\sum_{i=1}^e\mu_{i}=1,$ and
\begin{equation*}
 \mu_{i}\mu_{j}=\begin{cases}
    \mu_{i}, & \text{if $i=j$}\\
    0, & \text{if $i\neq j$.}
  \end{cases}
\end{equation*}
Therefore, $\{\mu_1,\mu_2\dots,\mu_e\}$ is a set of orthogonal idempotent elements in $R_{e,q}$.
\end{lemma}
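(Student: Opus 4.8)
The plan is to read off all three assertions directly from the two defining relations $z_iG_i+h_i\widehat{G}_i=1$ and $G_i\widehat{G}_i=u^e-1$, with every computation carried out in $\mathbb{F}_q[u]$ and then reduced modulo $u^e-1$. Conceptually this is nothing but the Chinese Remainder decomposition $\mathbb{F}_q[u]/\langle u^e-1\rangle\cong\prod_{i=1}^e\mathbb{F}_q[u]/\langle u-\alpha_i\rangle$, under which $\mu_i$ will be the $i$-th primitive idempotent $(0,\dots,1,\dots,0)$; but it is just as quick to argue elementwise.

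First I would dispose of the off-diagonal products. For $i\neq j$, $\mu_i\mu_j=h_ih_j\,\widehat{G}_i\widehat{G}_j$, and since $\widehat{G}_i=\prod_{k\neq i}(u-\alpha_k)$ and $\widehat{G}_j=\prod_{k\neq j}(u-\alpha_k)$, every linear factor $u-\alpha_k$ of $u^e-1$ appears in $\widehat{G}_i\widehat{G}_j$ (if $k\neq i$ it lies in $\widehat{G}_i$, while $u-\alpha_i$ lies in $\widehat{G}_j$ as $i\neq j$). Hence $u^e-1\mid\widehat{G}_i\widehat{G}_j$ and $\mu_i\mu_j\equiv0\pmod{u^e-1}$. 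For the diagonal case, multiply $z_iG_i+h_i\widehat{G}_i=1$ by $\mu_i=h_i\widehat{G}_i$ to get $\mu_i=z_iG_i\mu_i+h_i\widehat{G}_i\mu_i$; in the first summand $G_i\mu_i=G_ih_i\widehat{G}_i=h_i(u^e-1)\equiv0$, so $\mu_i\equiv h_i\widehat{G}_i\mu_i=\mu_i^2$, proving idempotency.

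For the sum, set $\mu=\sum_{i=1}^e\mu_i$. Reducing $z_iG_i+h_i\widehat{G}_i=1$ modulo $G_j=u-\alpha_j$ shows $\mu_i\equiv\delta_{ij}\pmod{u-\alpha_j}$: for $j\neq i$ one has $G_j\mid\widehat{G}_i$ so $\mu_i\equiv0$, whereas $\mu_i\equiv1-z_iG_i\equiv1\pmod{G_i}$. Summing over $i$ gives $\mu-1\equiv0\pmod{u-\alpha_j}$ for each $j=1,\dots,e$. Since the $u-\alpha_j$ are pairwise coprime, $\mu-1$ is divisible by their product $\prod_j(u-\alpha_j)=u^e-1$, i.e. $\sum_{i=1}^e\mu_i\equiv1\pmod{u^e-1}$. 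Together with the two product identities this says exactly that $\{\mu_1,\dots,\mu_e\}$ is a complete system of orthogonal idempotents in $R_{e,q}$.

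The one point I would take care with is the separability of $u^e-1$ — equivalently, the distinctness of $\alpha_1,\dots,\alpha_e$ — which is what makes the factors $u-\alpha_i$ pairwise coprime and legitimizes both the Chinese Remainder picture and the final "divisible by each coprime factor, hence by their product" step. This is guaranteed by the standing hypothesis $q=et+1$: then $et\equiv-1\pmod p$ forces $\gcd(e,p)=1$, so the derivative $eu^{e-1}$ is coprime to $u^e-1$ and the latter has $e$ distinct roots in $\mathbb{F}_q$. Everything else in the argument is routine bookkeeping with the relations $z_iG_i+h_i\widehat{G}_i=1$ and $G_i\widehat{G}_i=u^e-1$.
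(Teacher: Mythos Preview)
Your proof is correct and matches the paper's for the orthogonality $\mu_i\mu_j=0$ ($i\neq j$). The other two parts are handled differently. For idempotency you multiply the B\'ezout relation $z_iG_i+h_i\widehat G_i=1$ by $\mu_i$ and use $G_i\widehat G_i=u^e-1$ to kill the first term; the paper instead rewrites $\mu_i^2-\mu_i=z_iG_i(z_iG_i-1)$ and checks divisibility by each $u-\alpha_s$. For $\sum_i\mu_i=1$ the paper leverages the already-established orthogonality to compute $1-\sum_i\mu_i=\prod_i(1-\mu_i)=\prod_i z_iG_i\equiv 0$, whereas you reduce modulo each $u-\alpha_j$ to get $\sum_i\mu_i\equiv 1$ and then invoke pairwise coprimality of the linear factors. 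Both routes are short and elementary; your idempotency step is slightly more direct, and your explicit remark that $q=et+1$ forces $\gcd(e,p)=1$ (hence $u^e-1$ is separable) makes precise a point the paper uses without comment.
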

\begin{proof}
For $i\neq j$, we have $(u-\alpha_i)\mid h_{j}\widehat{G}_{j}$ and $(u-\alpha_j)\mid h_{i}\widehat{G}_{i}$. Thus, $\mu_i\mu_j=h_{i}\widehat{G}_{i}h_{j}\widehat{G}_{j}= 0 \pmod{u^e-1}$. To prove $\mu_i^2=\mu_i$, it is enough to show $(u-\alpha_s)\mid (\mu_i^2-\mu_i)$ for all $s=1,2,\dots,e$. For $i\neq s$, it follows from the construction. If $i=s$, then
\begin{align*}
    \mu_i^2-\mu_i&=\mu_i(\mu_i-1)\\
    &=h_{i}\widehat{G}_{i}(h_{i}\widehat{G}_{i}-1)\\
    &=z_iG_i(z_iG_i-1), ~(\text{since}~z_{i}G_{i}+h_{i}\widehat{G}_{i}=1).
\end{align*} Therefore, $(u-\alpha_i)\mid (\mu_i^2-\mu_i)$, and hence $\mu_i^2=\mu_i$.
{ Again,
\begin{align*}
 1-\sum_{i=1}^e\mu_i&=\prod_{i=1}^e(1-\mu_i)~(\text{since}~\mu_i^2=\mu_i,\mu_i\mu_j=0,i\neq j)\\
 &=\prod_{i=1}^e(1-h_i\widehat{G}_i)\\
 &=\prod_{i=1}^ez_iG_i\\
 &=0~\pmod{u^e-1}.
\end{align*}
Hence, $\sum_{i=1}^e\mu_i=1\pmod{u^e-1}$.}
\end{proof}
Taking into account Lemma \ref{crt} and the Chinese Remainder Theorem, we have the decomposition
\begin{align*}
R_{e,q}=\bigoplus_{i=1}^e\mu_{i}R_{e,q}\cong \bigoplus_{i=1}^e\mu_{i}\mathbb{F}_{q}.
\end{align*}
Therefore, any element $r\in R_{e,q}$ can be uniquely expressed as $r=\sum_{i=1}^e s_{i}\mu_{i}$, for $s_{i}\in \mathbb{F}_{q}, 1\leq i\leq e$. Then the units of $R_{e,q}$ can be classified by the result that $r$ is a unit in $R_{e,q}$ if and only if $s_i$ is a unit in $\mathbb{F}_q$, for $1\leq i\leq e$.

Let $GL_e(\mathbb{F}_q)$ be the set of all $e\times e$ invertible matrices over $\mathbb{F}_q$. Let $M\in GL_e(\mathbb{F}_q)$ such that $MM^T=\gamma I_e$, where $M^T$ is the transpose matrix of $M$, $I_e$ is the identity matrix of order $e$ and $\gamma\in \mathbb{F}_q^{*}=\mathbb{F}_q\setminus \{0\}$. { Since $q$ is odd, so for the existence of such matrix $M$, we restrict $\gamma\in \mathbb{F}_q^*$ such that $\gamma^e$ is a square in $\mathbb{F}_q^*$ (see [\cite{Hachenberger20}, Definition 7.6.4 and Proposition 7.6.6]).} Now, we define a map $\varphi: R_{e,q}\longrightarrow \mathbb{F}_{q}^{e}$ by
\begin{align}\label{map 1}
\varphi(r)=(s_{1},s_{2},\dots,s_{e})M=\boldsymbol{r}M,
\end{align}
where $r=\sum_{i=1}^e s_{i}\mu_{i}\in R_{e,q}$, for $s_{i}\in \mathbb{F}_{q}, 1\leq i\leq e$. Here, we use $\boldsymbol{r}$ for the vector $(s_{1},s_{2},\dots,s_{e})$. The map $\varphi$ can be extended to $R_{e,q}^{n}$ as follows:
\begin{align*}
\varphi: R_{e,q}^n\longrightarrow \mathbb{F}_{q}^{en}
\end{align*}
by
\begin{align}\label{map 1}
\nonumber \varphi(r_0,r_1,\dots,r_{n-1})=&[(s_{0,1},s_{0,2},\dots,s_{0,e})M,(s_{1,1},s_{1,2},\dots,s_{1,e})M,\dots ,(s_{n-1,1},s_{n-1,2},\dots,s_{n-1,e})M]\\
=&(\boldsymbol{r_0}M,\boldsymbol{r_1}M,\dots,\boldsymbol{r_{n-1}}M),
\end{align}
where $r_i=\sum_{j=1}^e s_{i,j}\mu_j$, for $0\leq i\leq n-1$.

The Hamming weight $w_{H}(c)$ is the number of non-zero components of the codeword $c=(c_{0},c_{1},\dots,c_{n-1})\in \mathcal{C}$ and the distance between two codewords is given by $d_{H}(c',c'')=w_{H}(c'-c'')$. The Hamming distance of $\mathcal{C}$ is defined as $d_{H}(\mathcal{C})=\min\{d_{H}(c',c'')\mid c'\neq c'',$  for all $c',c''\in \mathcal{C} \}$. The Gray weight of any element $r\in R_{e,q}$ is $w_{G}(r)=w_{H}(\varphi(r))$ and { the} Gray weight for $\overline{r}=(r_{0},r_{1},\dots,r_{n-1})\in R_{e,q}^{n}$ is $w_{G}(\overline{r})=\sum_{i=0}^{n-1}w_{G}(r_{i})$. The Gray distance between any two codewords $c',c''$ is defined as $d_{G}(c',c'')=w_{G}(c'-c'')$ and { the} Gray distance of $\mathcal{C}$ is $d_{G}(\mathcal{C})=\min\{d_{G}(c',c'')\mid c'\neq c'', c',c''\in \mathcal{C} \}$. Now, we present an example for $e=3,q=7$ to show the procedure of finding orthogonal { idempotent elements and the} corresponding Gray map.

\begin{example}

Let $e=3, t=2, q=3\cdot 2+1=7$ and $R_{3,7}=\mathbb{F}_7[u]/\langle u^3-1\rangle$. Then in $\mathbb{F}_7[u]$, we have
\begin{align*}
    u^3-1=(u+3)(u+5)(u+6).
\end{align*}
Let $G_1=u+3,G_2=u+5,G_3=u+6$. Then
\begin{align*}
    \widehat{G}_{1}=&\frac{u^3-1}{G_1}= (u+5)(u+6),\\
    \widehat{G}_{2}=&\frac{u^3-1}{G_2}=(u+3)(u+6),\\
    \widehat{G}_{3}=&\frac{u^3-1}{G_3}=(u+3)(u+5).
\end{align*}
Therefore,
\begin{align*}
    &h_1=6,z_1=(u+3),\mu_1=6(u+5)(u+6),\\
    &h_2=3,z_2=4(u+4),\mu_2=3(u+3)(u+6),\\
    &h_3=5,z_3=2(u+2),\mu_3=5(u+3)(u+5).
\end{align*}
Then $\{\mu_1,\mu_2,\mu_3\}$ is the set of orthogonal idempotent elements in $R_{3,7}=\mathbb{F}_7[u]/\langle u^3-1\rangle$, and hence by the Chinese Remainder Theorem, we have $R_{3,7}=\mu_{1}R_{3,7}\oplus\mu_{2}R_{3,7}\oplus\mu_{3}R_{3,7}\cong \mu_{1}\mathbb{F}_{7}\oplus\mu_{2}\mathbb{F}_{7}\oplus\mu_{3}\mathbb{F}_{7}$. Again, any element $r\in R_{3,7}$ can be written as
\begin{align*}
    r=a_0+ua_1+u^2a_2=&(a_0+4a_1+2a_2)\mu_1+(a_0+2a_1+4a_2)\mu_2+(a_0+a_1+a_2)\mu_3.
\end{align*} Thus, the map $\varphi:R_{3,7}\longrightarrow \mathbb{F}^3_7$ is given by \begin{align*}
    \varphi(r=a_0+ua_1+u^2a_2)=(a_0+4a_1+2a_2,a_0+2a_1+4a_2,a_0+a_1+a_2)M,
\end{align*}
where $a_i\in \mathbb{F}_7$, for $0\leq i\leq 2$ and $M\in GL_3(\mathbb{F}_7)$ with $MM^T=\gamma I_3$, $\gamma^3\in \mathbb{F}_7^*$ is a square.
\end{example}

{ In Table \ref{tabA}, we give
some examples of orthogonal idempotent set for different values of $e$
and $q$.} Also, we calculate the canonical components of $r\in R_{e,q}$ such that $r=\sum_{i=1}^e\mu_is_i$, which are useful in the subsequent examples.

{ Now, we remind
some well-known techniques from \cite{Gao15} as follows:} Let $A_{i}$ be a linear code for $1\leq i\leq e$, we denote $\bigoplus_{i=1}^eA_i=\{a_{1}+a_{2}+\dots+a_{e}\mid a_{i}\in A_{i}~\forall~i \}$. Let $\mathcal{C}$ be a linear code of length $n$ over $R_{e,q}$. For $1\leq i\leq e$, define
$\mathcal{C}_{i}:=\{s_{i}\in \mathbb{F}_{q}^{n}\mid \exists ~s_{1},s_{2},\dots,s_{i-1},s_{i+1},\dots,s_e$ such that $\sum_{i=1}^{e}s_{i}\mu_{i}\in \mathcal{C} \}$. Then $\mathcal{C}_{i}$ is { a} linear code of length $n$ over $\mathbb{F}_{q}$ for $1\leq i\leq e$. Also, the code $\mathcal{C}$ can be uniquely expressed as $\mathcal{C}=\bigoplus_{i=1}^e\mu_{i}\mathcal{C}_{i}$ and $\mid \mathcal{C}\mid=\prod_{i=1}^e\mid \mathcal{C}_{i}\mid$. Moreover, a generator matrix for the code $\varphi(\mathcal{C})$ can be obtained as $M=\begin{pmatrix}
 \varphi(\mu_{1}M_{1}) \\
  \varphi(\mu_{2}M_{2}) \\
  \vdots\\
  \varphi(\mu_{e}M_{e})
\end{pmatrix}$,
where $M_{i}$ is a generator matrix of $\mathcal{C}_{i}$, for $1\leq i\leq e$.
\begin{lemma}
The map $\varphi$ defined in equation (\ref{map 1}) is linear and isometric from ($R_{e,q}^{n}$, Gray distance) to ($\mathbb{F}_{q}^{en}$, Hamming distance).
\end{lemma}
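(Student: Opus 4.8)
The plan is to establish two properties of $\varphi$ separately: $\mathbb{F}_q$-linearity, and the isometry (Gray-distance to Hamming-distance) property. Linearity is the routine part: since $\varphi$ is built coordinatewise from the single-symbol map $r=\sum_i s_i\mu_i \mapsto \boldsymbol{r}M=(s_1,\dots,s_e)M$, and the CRT decomposition $R_{e,q}\cong\bigoplus_i\mu_i\mathbb{F}_q$ is an $\mathbb{F}_q$-algebra isomorphism, the coordinate vector $\boldsymbol{r}$ depends $\mathbb{F}_q$-linearly on $r$ (i.e.\ $\boldsymbol{r+r'}=\boldsymbol{r}+\boldsymbol{r'}$ and $\boldsymbol{\lambda r}=\lambda\boldsymbol{r}$ for $\lambda\in\mathbb{F}_q$, because $\lambda=\sum_i\lambda\mu_i$ has all canonical components equal to $\lambda$). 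Right multiplication by the fixed matrix $M$ is $\mathbb{F}_q$-linear, so $\varphi$ is $\mathbb{F}_q$-linear on $R_{e,q}$ and hence, acting blockwise, on $R_{e,q}^n$.

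For the isometry statement, since both distances are translation invariant ($d_G(c',c'')=w_G(c'-c'')$ and $d_H$ likewise, and $\varphi$ is linear), it suffices to show $w_G(\overline{r})=w_H(\varphi(\overline{r}))$ for every $\overline{r}\in R_{e,q}^n$. By the definitions $w_G(\overline{r})=\sum_{i=0}^{n-1}w_G(r_i)=\sum_{i=0}^{n-1}w_H(\varphi(r_i))$ and $w_H(\varphi(\overline{r}))=\sum_{i=0}^{n-1}w_H(\boldsymbol{r_i}M)$ since $\varphi(\overline{r})$ is the concatenation of the blocks $\boldsymbol{r_i}M$; so the claim reduces to the single-symbol identity $w_G(r)=w_H(\boldsymbol{r}M)$, which holds by the very definition $w_G(r)=w_H(\varphi(r))=w_H(\boldsymbol{r}M)$. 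Thus the isometry is essentially definitional once one unwinds the weights; the only thing that needs a remark is that $\varphi$ is well-defined and a bijection, so that it is genuinely distance-\emph{preserving} and not merely distance-non-increasing. Bijectivity follows because $\boldsymbol{r}\mapsto\boldsymbol{r}M$ is a bijection of $\mathbb{F}_q^e$ ($M\in GL_e(\mathbb{F}_q)$) composed with the CRT bijection $r\mapsto\boldsymbol{r}$, and applying this coordinatewise gives a bijection $R_{e,q}^n\to\mathbb{F}_q^{en}$.

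The one genuine subtlety — the part I expect to be the main obstacle, modest as it is — is confirming that $M$ being orthogonal-up-to-scalar ($MM^T=\gamma I_e$) is not actually needed for $\varphi$ to be a linear isometry: any $M\in GL_e(\mathbb{F}_q)$ would do for \emph{this} lemma, since Hamming weight of $\boldsymbol{r}M$ equals the number of nonzero entries of $\boldsymbol{r}$ only if we \emph{define} $w_G$ through $\varphi$ (which the paper does), not because $M$ is a monomial matrix. I would therefore phrase the proof so as to make transparent that the isometry is built into the definition of $w_G$, and note that the condition $MM^T=\gamma I_e$ is reserved for later (it is what makes $\varphi$ interact well with the Euclidean inner product, relevant for self-dual and LCD codes), not for the present statement. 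The write-up is then: (1) reduce to the single-symbol map; (2) check additivity and $\mathbb{F}_q$-scalar homogeneity of $r\mapsto\boldsymbol{r}$ via CRT, then compose with $M$; (3) extend blockwise to length $n$; (4) unwind $w_G$ and conclude $w_H(\varphi(\overline{r}))=w_G(\overline{r})$, hence $d_H(\varphi(c'),\varphi(c''))=w_H(\varphi(c'-c''))=w_G(c'-c'')=d_G(c',c'')$.
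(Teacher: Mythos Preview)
Your argument is correct and follows the same approach as the paper: the isometry is a tautology because the Gray weight is \emph{defined} by $w_G(r)=w_H(\varphi(r))$, and linearity is immediate from the CRT decomposition together with multiplication by the fixed matrix $M$. The paper's own proof is in fact the one-liner ``It follows from the definition of the Gray weight,'' so your write-up simply unpacks that sentence (and your remark that the condition $MM^T=\gamma I_e$ is irrelevant here is accurate, though the paper does not comment on this).
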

\begin{proof}
{ It follows from the definition of the Gray weight.}
\end{proof}

\begin{theorem}\label{thm gray image}
Let $\mathcal{C}$ be a linear code of length $n$  over $R_{e,q}$, $\mid \mathcal{C}\mid=q^{ek}$ and Gray distance $d_G$.
\begin{enumerate}
    \item Then $\varphi(\mathcal{C})$ is an $[en,k,d_H]$ linear code over $\mathbb{F}_{q}$ where $d_{H}=d_{G}$.
    \item $\varphi(\mathcal{C}^{\perp})=(\varphi(\mathcal{C}))^{\perp}$.
    \item $\varphi(\mathcal{C})$ is a self-orthogonal (Euclidean) linear code of length $en$ over $\mathbb{F}_q$ if $\mathcal{C}$ is a self-orthogonal (Euclidean) linear code over $R_{e,q}$.
    \item $\varphi(\mathcal{C})$ is a self-dual code if and only if $\mathcal{C}$ is a self-dual code.
\end{enumerate}
\end{theorem}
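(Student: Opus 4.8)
The plan is to establish the four assertions in order, with parts (1) and (2) carrying the content and (3)--(4) following formally. For part (1), I would invoke the preceding lemma: $\varphi$ is $\mathbb{F}_q$-linear, so $\varphi(\mathcal{C})$ is an $\mathbb{F}_q$-subspace of $\mathbb{F}_q^{en}$, i.e. a linear code of length $en$; moreover $\varphi$ is injective because $M\in GL_e(\mathbb{F}_q)$, so $|\varphi(\mathcal{C})|=|\mathcal{C}|=q^{ek}$, which fixes its $\mathbb{F}_q$-dimension. The equality $d_H=d_G$ is immediate from the same lemma, since a linear isometry from the Gray metric on $R_{e,q}^n$ to the Hamming metric on $\mathbb{F}_q^{en}$ carries the minimum Gray weight of $\mathcal{C}$ to the minimum Hamming weight of $\varphi(\mathcal{C})$.

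The heart is part (2). Take $x=(x_0,\dots,x_{n-1})\in\mathcal{C}$ and $y=(y_0,\dots,y_{n-1})\in\mathcal{C}^\perp$, and write $x_i=\sum_{j=1}^e s_{i,j}\mu_j$, $y_i=\sum_{j=1}^e t_{i,j}\mu_j$ with $s_{i,j},t_{i,j}\in\mathbb{F}_q$, so that $\boldsymbol{x_i}=(s_{i,1},\dots,s_{i,e})$ and $\boldsymbol{y_i}=(t_{i,1},\dots,t_{i,e})$. Using $MM^T=\gamma I_e$ I compute
\[
\varphi(x)\cdot\varphi(y)=\sum_{i=0}^{n-1}(\boldsymbol{x_i}M)\cdot(\boldsymbol{y_i}M)=\sum_{i=0}^{n-1}\boldsymbol{x_i}\,MM^T\,\boldsymbol{y_i}^{T}=\gamma\sum_{i=0}^{n-1}\sum_{j=1}^{e}s_{i,j}t_{i,j}.
\]
On the other hand, the orthogonal idempotent relations $\mu_j\mu_k=\delta_{jk}\mu_j$ of Lemma \ref{crt} give $x\cdot y=\sum_{i=0}^{n-1}x_iy_i=\sum_{j=1}^{e}\bigl(\sum_{i=0}^{n-1}s_{i,j}t_{i,j}\bigr)\mu_j$. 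Since $y\in\mathcal{C}^\perp$ we have $x\cdot y=0$, and because $\{\mu_1,\dots,\mu_e\}$ is a CRT basis, a combination $\sum_j c_j\mu_j$ is zero only if every $c_j=0$; hence each inner sum $\sum_i s_{i,j}t_{i,j}$ vanishes, so the double sum above is $0$ and $\varphi(x)\cdot\varphi(y)=0$. This shows $\varphi(\mathcal{C}^\perp)\subseteq(\varphi(\mathcal{C}))^\perp$. To get equality I would count cardinalities: $R_{e,q}$ is a finite Frobenius ring (equivalently, use $R_{e,q}\cong\bigoplus_{i=1}^e\mu_i\mathbb{F}_q$ together with $\mathcal{C}=\bigoplus_i\mu_i\mathcal{C}_i$ and $\mathcal{C}^\perp=\bigoplus_i\mu_i\mathcal{C}_i^\perp$), so $|\mathcal{C}|\,|\mathcal{C}^\perp|=|R_{e,q}|^n=q^{en}$; injectivity of $\varphi$ then yields $|\varphi(\mathcal{C})|\,|\varphi(\mathcal{C}^\perp)|=q^{en}=|\varphi(\mathcal{C})|\,|(\varphi(\mathcal{C}))^\perp|$, so $|\varphi(\mathcal{C}^\perp)|=|(\varphi(\mathcal{C}))^\perp|$ and the inclusion is an equality.

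Parts (3) and (4) are then immediate consequences of (2). If $\mathcal{C}\subseteq\mathcal{C}^\perp$ then $\varphi(\mathcal{C})\subseteq\varphi(\mathcal{C}^\perp)=(\varphi(\mathcal{C}))^\perp$, so $\varphi(\mathcal{C})$ is self-orthogonal. If $\mathcal{C}=\mathcal{C}^\perp$ then $\varphi(\mathcal{C})=\varphi(\mathcal{C}^\perp)=(\varphi(\mathcal{C}))^\perp$; conversely, if $\varphi(\mathcal{C})$ is self-dual then $\varphi(\mathcal{C}^\perp)=(\varphi(\mathcal{C}))^\perp=\varphi(\mathcal{C})$, and injectivity of $\varphi$ forces $\mathcal{C}^\perp=\mathcal{C}$.

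I expect the only real obstacle to be the bookkeeping in part (2): one must carefully distinguish the two ways of grouping the scalars $s_{i,j}$ --- by the coordinate index $i$, which is the grouping the Gray map acts on through the vectors $\boldsymbol{x_i}$, versus by the idempotent index $j$, which is the grouping the $R_{e,q}$-inner product separates through the CRT basis. What reconciles the two is simply that $\sum_{i}\sum_{j}s_{i,j}t_{i,j}$ may be summed in either order, together with the fact that $MM^{T}=\gamma I_e$ with $\gamma\in\mathbb{F}_q^{*}$ turns the Euclidean pairing of the Gray images into a nonzero scalar multiple of that flattened pairing, the scalar being harmless for the orthogonality conclusion.
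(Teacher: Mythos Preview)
Your proposal is correct and follows essentially the same approach as the paper: linearity and isometry of $\varphi$ for part (1), the computation $\varphi(x)\cdot\varphi(y)=\gamma\sum_{i,j}s_{i,j}t_{i,j}$ combined with $x\cdot y=0$ for the inclusion in part (2) followed by a cardinality count for equality, and then (3)--(4) as formal consequences. The only difference is that you are slightly more explicit---spelling out the CRT-basis uniqueness to force each idempotent component of $x\cdot y$ to vanish, and invoking the Frobenius property for $|\mathcal C|\,|\mathcal C^\perp|=q^{en}$---where the paper simply asserts the implication and the size equality.
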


\begin{proof}
\begin{enumerate}
    \item Since $\varphi$ is a linear map, $\varphi(\mathcal{C})$ is a linear code of length $en$ over $\mathbb{F}_{q}$. Also, $\varphi$ being isometric, $\varphi(\mathcal{C})$ is an $[en,k,d_H]$ linear code over $\mathbb{F}_{q}$ with $d_{H}=d_{G}$.
    \item

    Let $c=(c_0,c_1,\dots,c_{n-1})\in \mathcal{C}$ { and} $d=(d_0,d_1,\dots,d_{n-1})\in \mathcal{C}^{\perp}$, where $c_j=\sum_{i=1}^e\mu_it_j^i,~d_j=\sum_{i=1}^e\mu_im_j^i$, $t^i_j,m^i_j\in \mathbb{F}_q$ for all $i,j$. Now, $c\cdot d=\sum_{j=0}^{n-1}c_jd_j=0$ gives, $\sum_{j=0}^{n-1}(t^1_jm^1_j+t^2_jm^2_j+\dots+t^e_jm^e_j)=0$. Again,
    \begin{align*}
        &\varphi(c)=[(t_0^1,t_0^2,\dots,t_0^e)M,\dots, (t_{n-1}^1,t_{n-1}^2,\dots,t_{n-1}^e)M]=(\alpha_0M,\dots,\alpha_{n-1}M)\\
        &\text{and}\\
        &\varphi(d)=[(m_0^1,m_0^2,\dots,m_0^e)M,\dots, (m_{n-1}^1,m_{n-1}^2,\dots,m_{n-1}^e)M]=(\beta_0M,\dots,\beta_{n-1}M),
    \end{align*}{}
where $\alpha_j=(t_j^1,t_j^2,\dots,t_j^e),$ $\beta_j=(m_j^1,m_j^2,\dots,m_j^e)$ for $0\leq j\leq n-1$ and $MM^T=\gamma I_e$. Now,
\begin{align*}
    \varphi(c)\cdot \varphi(d)=\varphi(c) \varphi(d)^T=\sum_{j=0}^{n-1}\alpha_jMM^T\beta_j^T=\gamma\sum_{j=0}^{n-1}\alpha_j\beta_j^T=\gamma\sum_{j=0}^{n-1}(t^1_jm^1_j+t^2_jm^2_j+\dots+t^e_jm^e_j)=0.
\end{align*}
Since $c\in \mathcal{C}$ and $d\in \mathcal{C}^{\perp}$ { are} arbitrary, so $\varphi(\mathcal{C}^{\perp})\subseteq (\varphi(\mathcal{C}))^{\perp}$. On the other side, as $\varphi$ is a bijective linear map, so $\mid \varphi(\mathcal{C}^{\perp})\mid= \mid (\varphi(\mathcal{C}))^{\perp}\mid$. Therefore, $\varphi(\mathcal{C}^{\perp})=(\varphi(\mathcal{C}))^{\perp}$

\item Let $\mathcal{C}$ be self-orthogonal, that is, $\mathcal{C}\subseteq \mathcal{C}^{\perp}$. Now, $\varphi(\mathcal{C})\subseteq \varphi(\mathcal{C}^{\perp})=(\varphi(\mathcal{C}))^{\perp}$. Hence $\varphi(\mathcal{C})$ is a self-orthogonal linear code of length $en$ over $\mathbb{F}_q$.
\item Let $\mathcal{C}$ be self-dual, that is $\mathcal{C}=\mathcal{C}^{\perp}$. Then $\varphi(\mathcal{C})=\varphi(\mathcal{C}^{\perp})=(\varphi(\mathcal{C}))^{\perp}$. Hence $\varphi(\mathcal{C})$ is a self-dual linear code over $\mathbb{F}_q$. Conversely, let $\varphi(\mathcal{C})$ be a self-dual code. Then $\varphi(\mathcal{C})=(\varphi(\mathcal{C}))^{\perp}=\varphi(\mathcal{C}^{\perp})$. Since $\varphi$ is bijection, $\mathcal{C}=\mathcal{C}^{\perp}$. Hence $\mathcal{C}$ is a self-dual code of length $n$ over $R_{e,q}$.

\end{enumerate}
\end{proof}
Now, to obtain the necessary and sufficient condition for { the} existence of self-dual codes we calculate dual codes in { the} next result, analogous to Theorem 5 of \cite{Zheng17}.
\begin{theorem}\label{th:2}
Let $\mathcal{C}=\bigoplus_{i=1}^e\mu_{i}\mathcal{C}_{i}$ be a linear code of length $n$ over $R_{e,q}$. Then
\begin{enumerate}
    \item $\mathcal{C}^{\perp}=\bigoplus_{i=1}^e\mu_{i}\mathcal{C}_{i}^{\perp}$, and
    \item  $\mathcal{C}$ is self-dual (Euclidean) if and only if $\mathcal{C}_{i}$ is self-dual (Euclidean), for $1\leq i\leq e$ .
\end{enumerate}
\end{theorem}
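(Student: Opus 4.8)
The plan is to exploit the orthogonal idempotent decomposition $R_{e,q}=\bigoplus_{i=1}^e\mu_i R_{e,q}$ together with the fact (established in Lemma~\ref{crt}) that the $\mu_i$ are orthogonal idempotents summing to $1$. The key technical observation I would isolate first is how the idempotents behave under the inner product: since $\mu_i\mu_j=\delta_{ij}\mu_i$, for two codewords $c=\sum_{i=1}^e\mu_i c^{(i)}$ and $d=\sum_{i=1}^e\mu_i d^{(i)}$ with $c^{(i)}\in\mathbb{F}_q^n$ the $i$-th canonical component, one computes termwise $c\cdot d=\sum_{i=1}^e\mu_i\,(c^{(i)}\cdot d^{(i)})$, where the right-hand inner products are taken in $\mathbb{F}_q^n$. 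Because $\{\mu_1,\dots,\mu_e\}$ is a basis-like orthogonal system (the sum $\sum_i\mu_i s_i$ is unique), $c\cdot d=0$ in $R_{e,q}$ holds \emph{if and only if} $c^{(i)}\cdot d^{(i)}=0$ in $\mathbb{F}_q$ for every $i$.

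For part (1) I would argue both inclusions. Take $x=\sum_{i=1}^e\mu_i x^{(i)}$ with $x^{(i)}\in\mathcal{C}_i^{\perp}$. Given any $y=\sum_{i=1}^e\mu_i y^{(i)}\in\mathcal{C}=\bigoplus_i\mu_i\mathcal{C}_i$, we have $y^{(i)}\in\mathcal{C}_i$, so $x^{(i)}\cdot y^{(i)}=0$ for all $i$, and hence by the observation above $x\cdot y=0$; thus $\bigoplus_i\mu_i\mathcal{C}_i^{\perp}\subseteq\mathcal{C}^{\perp}$. For the reverse inclusion, let $x=\sum_i\mu_i x^{(i)}\in\mathcal{C}^{\perp}$ and fix $j$ and any $w\in\mathcal{C}_j$; by definition of $\mathcal{C}_j$ there is a codeword of $\mathcal{C}$ whose $j$-th component is $w$, and multiplying that codeword by the idempotent $\mu_j$ (which lies in $R_{e,q}$, so $\mathcal{C}$ is closed under it) produces $\mu_j w\in\mathcal{C}$. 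Then $0=x\cdot(\mu_j w)=\mu_j\,(x^{(j)}\cdot w)$, and since the $\mu_j$-component determines the value, $x^{(j)}\cdot w=0$; as $w\in\mathcal{C}_j$ was arbitrary, $x^{(j)}\in\mathcal{C}_j^{\perp}$. Hence $\mathcal{C}^{\perp}\subseteq\bigoplus_i\mu_i\mathcal{C}_i^{\perp}$, giving equality. (A cardinality check via $|\mathcal{C}^{\perp}|=q^{en}/|\mathcal{C}|$ and $|\mathcal{C}_i^{\perp}|=q^n/|\mathcal{C}_i|$ together with $|\mathcal{C}|=\prod_i|\mathcal{C}_i|$ can serve as an alternative or a sanity check for one of the inclusions.)

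Part (2) is then immediate: $\mathcal{C}=\mathcal{C}^{\perp}$ means $\bigoplus_i\mu_i\mathcal{C}_i=\bigoplus_i\mu_i\mathcal{C}_i^{\perp}$, and by the uniqueness of the decomposition this is equivalent to $\mathcal{C}_i=\mathcal{C}_i^{\perp}$ for every $i$. The same uniqueness gives the Euclidean self-orthogonal version. I expect the main (though modest) obstacle to be the reverse inclusion in part (1): one must be careful to justify that $\mathcal{C}$ is closed under multiplication by each $\mu_j$ — which it is, as an $R_{e,q}$-submodule — and that this lets one ``isolate'' an arbitrary element of $\mathcal{C}_j$ inside an actual codeword. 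Everything else is bookkeeping with the orthogonal idempotents; no computation with the matrix $M$ or the Gray map is needed for this theorem.
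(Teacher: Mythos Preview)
Your proposal is correct and follows essentially the same route as the paper. Both arguments hinge on the orthogonal-idempotent identity $c\cdot d=\sum_i\mu_i\,(c^{(i)}\cdot d^{(i)})$ and the trick of multiplying by a single $\mu_j$ to isolate one component; the paper packages this by introducing the auxiliary component codes $\mathcal{S}_i$ of $\mathcal{C}^{\perp}$ and proving $\mathcal{S}_i=\mathcal{C}_i^{\perp}$, whereas you establish the two inclusions for $\mathcal{C}^{\perp}$ directly, but the content is the same.
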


\begin{proof}
\begin{enumerate}
\item Let $\mathcal{C}$ be a linear code of length $n$ over $R_{e,q}$. Also, let
\begin{align*}
    \mathcal{S}_{i}:=\{{ s_{i}}\in \mathbb{F}_{q}^{n}\mid \exists ~s_{1},s_{2},\cdots,s_{i-1},s_{i+1},\cdots,s_e ~\text{such that}~ \sum_{i=1}^{e}s_{i}\mu_{i}\in \mathcal{C}^{\perp} \}, ~\text{for}~ 1\leq i\leq e.
\end{align*}
Thus, $\mathcal{C}^{\perp}$ can be uniquely written as $\mathcal{C}^{\perp}=\bigoplus_{i=1}^e\mu_{i}\mathcal{S}_{i}$. Clearly, $\mathcal{S}_{1}\subseteq \mathcal{C}_{1}^{\perp}$. On the other side, let $r\in \mathcal{C}_{1}^{\perp}$. Then $r\cdot a_{1}=0$ for all $a_{1}\in \mathcal{C}_{1}$. Let $z=\sum_{i=1}^{e}\mu_{i}a_{i}\in \mathcal{C}$. Now, $\mu_{1}r\cdot z=\mu_{1}a_{1}\cdot r=0$, which implies $\mu_{1}r\in \mathcal{C}^{\perp}$. From the unique expression of $\mathcal{C}^{\perp}$, we have $r\in \mathcal{S}_{1}$. Therefore, $\mathcal{C}_{1}^{\perp}\subseteq \mathcal{S}_{1}$. Hence, $\mathcal{S}_{1}= \mathcal{C}_{1}^{\perp}$. Similarly, $\mathcal{C}_{i}^{\perp}=\mathcal{S}_{i}$ for $2\leq i\leq e$. Consequently, $\mathcal{C}^{\perp}=\bigoplus_{i=1}^e\mu_{i}\mathcal{C}_{i}^{\perp}$.\\
\item Further, $\mathcal{C}$ is self-dual if and only if $ \mathcal{C}^{\perp}=\mathcal{C}$ if and only if $ \bigoplus_{i=1}^e\mu_{i}\mathcal{C}_{i}^{\perp}=\bigoplus_{i=1}^e\mu_{i}\mathcal{C}_{i}$. Therefore, $\mathcal{C}_{i}^{\perp}=\mathcal{C}_{i}$ for $1\leq i\leq e$.
\end{enumerate}
\end{proof}

\section{Cyclic codes over $R_{e,q}$}\label{sec 3}
In this section, we determine the structure of cyclic codes of length $n$ over $R_{e,q}$ and use these results to obtain LCD codes in the subsequent section. Note that in \cite{MNR} the authors considered this type of codes as codes over an affine algebras with a finite commutative chain coefficient ring where the affine algebra is $R_{e,q}$. However, we will construct them from scratch based on the results before so we get an explicit description of the duality and LCD condition based on Theorem~\ref{th:2} above.
\begin{df}
A linear code $\mathcal{C}$ of length $n$ over $R_{e,q}$ is said to be a cyclic code if for any $c=(c_{0},c_{1},\dots,c_{n-1})\in \mathcal{C}$, we have $\sigma(c):=(c_{n-1},c_{0},\dots,c_{n-2})\in \mathcal{C}$. Here, $\sigma$ is called the cyclic shift operator.
\end{df}
We identify each codeword $c=(c_{0},c_{1},\dots,c_{n-1})\in \mathcal{C}$ by a polynomial $c(x)\in R_{e,q}[x]/\langle x^{n}-1\rangle$ under the correspondence $c=(c_{0},c_{1},\dots,c_{n-1})\longmapsto c(x)=c_{0}+c_{1}x+\dots+c_{n-1}x^{n-1} \pmod{x^{n}-1}$. Under this identification, we have the following result.

\begin{theorem}
Let $\mathcal{C}$ be a linear code of length $n$ over $R_{e,q}$. Then $\mathcal{C}$ is a cyclic code if and only if $\mathcal{C}$ is an ideal of the ring $R_{e,q}[x]/\langle x^{n}-1\rangle$.
\end{theorem}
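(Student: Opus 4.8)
The plan is to prove the equivalence by the standard polynomial-algebra argument, transported through the identification of $\mathcal{C}$ with a subset of $R_{e,q}[x]/\langle x^n-1\rangle$. First I would establish the easy direction: if $\mathcal{C}$ is an ideal, then in particular multiplication by $x$ sends $\mathcal{C}$ into itself, and under the chosen correspondence multiplication of $c(x)=c_0+c_1x+\dots+c_{n-1}x^{n-1}$ by $x$ modulo $x^n-1$ produces exactly $c_{n-1}+c_0x+\dots+c_{n-2}x^{n-1}$, i.e. the polynomial attached to $\sigma(c)$. Hence $\sigma(\mathcal{C})\subseteq\mathcal{C}$, so $\mathcal{C}$ is cyclic.

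For the converse, suppose $\mathcal{C}$ is cyclic. I would first note that $\mathcal{C}$, being a linear code, is an $R_{e,q}$-submodule of $R_{e,q}^n$, so under the correspondence it is closed under addition and under multiplication by the scalars in $R_{e,q}$ (the constant polynomials). It therefore suffices to show closure under multiplication by $x$, since an arbitrary element of $R_{e,q}[x]/\langle x^n-1\rangle$ is an $R_{e,q}$-linear combination of $1,x,\dots,x^{n-1}$: closure under multiplication by $x^j$ follows by iterating the cyclic shift $j$ times (the hypothesis $\sigma(\mathcal{C})\subseteq\mathcal{C}$ gives $\sigma^j(\mathcal{C})\subseteq\mathcal{C}$ for all $j\geq 0$), and then closure under multiplication by $\sum_j a_j x^j$ follows by combining these with the module structure. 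As observed above, multiplication by $x$ corresponds precisely to $\sigma$, so the cyclicity hypothesis gives exactly what is needed, and $\mathcal{C}$ is an ideal.

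There is no real obstacle here; the only point requiring a line of care is the bookkeeping that $x\cdot c(x)\bmod (x^n-1)$ really is the polynomial of $\sigma(c)$ — this is where the reduction $x^n\equiv 1$ moves the top coefficient $c_{n-1}$ down to the constant term — and the observation that being an ideal is equivalent to being a submodule that is also stable under multiplication by $x$ alone. Both are routine, so the proof is short.
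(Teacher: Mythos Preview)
Your proposal is correct and is essentially the same argument as the paper's proof: both identify the cyclic shift $\sigma$ with multiplication by $x$ modulo $x^n-1$, use iteration to get closure under all powers $x^j$, and then invoke the $R_{e,q}$-submodule property to conclude closure under arbitrary polynomial multipliers. The only difference is the order in which you present the two implications.
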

\begin{proof}
Let $\mathcal{C}$ be a cyclic code of length $n$ over $R_{e,q}$. Let $a(x)\in R_{e,q}[x]$ and $c(x)=c_0+c_1x+\dots+ c_{n-1}x^{n-1}\in \mathcal{C}$. Then $xc(x)=c_0x+c_1x^2+\dots+ c_{n-2}x^{n-1}+c_{n-1}x^n=c_{n-1}+c_0x+c_1x^2+\dots +c_{n-2}x^{n-1} { \pmod{x^n-1}}=\sigma(c)\in \mathcal{C}$. In similar manner, for any positive integer $i\geq 2$, we have $x^ic(x)\in \mathcal{C}$. Since $\mathcal{C}$ is an $R_{e,q}$-submodule, $a(x)c(x)\in \mathcal{C}$. Hence, $\mathcal{C}$ is an ideal of $R_{e,q}[x]/\langle x^{n}-1\rangle$.

Conversely, let $\mathcal{C}$ be an ideal of $R_{e,q}[x]/\langle x^{n}-1\rangle$. Let $c=(c_0,c_1,\dots,c_{n-1})\in \mathcal{C}$. Now, $\sigma(c)=c_{n-1}+c_0x+c_1x^2+\dots+ c_{n-2}x^{n-1}=xc(x)\in \mathcal{C}$. Hence $\mathcal{C}$ is a cyclic code of length $n$ over $R_{e,q}$.
\end{proof}

\begin{theorem}\cite{Hill} \label{thm gen cyclic field}
Let $\mathcal{C}$ be a cyclic code of length $n$ over $\mathbb{F}_q$. Then there exists a unique polynomial $g(x)\in \mathbb{F}_q[x]$ such that $\mathcal{C}=\langle g(x)\rangle$ and $\mathcal{C}^{\perp}=\langle h^*(x)\rangle$ where $x^n-1=g(x)h(x)$ and $h^*(x)$ is the reciprocal polynomial of $h(x)$.
\end{theorem}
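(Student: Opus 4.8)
The plan is to give the classical proof via the polynomial quotient ring $\mathbb{F}_q[x]/\langle x^n-1\rangle$, which is a principal ideal ring since $\mathbb{F}_q[x]$ is a PID and quotients of PIDs have all ideals principal. First I would argue that every ideal $\mathcal{C}$ of $\mathbb{F}_q[x]/\langle x^n-1\rangle$ is generated by a single monic polynomial $g(x)$ that divides $x^n-1$: take $g(x)$ to be a nonzero monic polynomial of least degree among (representatives of) elements of $\mathcal{C}$, then use the division algorithm in $\mathbb{F}_q[x]$ to show any $c(x)\in\mathcal{C}$ has remainder $0$ upon division by $g(x)$ (the remainder lies in $\mathcal{C}$ and has smaller degree, forcing it to be zero), and similarly divide $x^n-1$ by $g(x)$ to conclude $g(x)\mid x^n-1$. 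Uniqueness of the monic generator of minimal degree follows because if $g_1,g_2$ both worked then each divides the other, so they are associates, hence equal by monicity.

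Next I would address the dual. Write $x^n-1=g(x)h(x)$ with $\deg g=n-k$, so $\deg h=k$ and $\mathcal{C}=\langle g(x)\rangle$ has dimension $k$ (a basis being $g(x),xg(x),\dots,x^{k-1}g(x)$). The key computational step is to show that $\mathcal{C}^{\perp}$ is exactly the ideal generated by the reciprocal polynomial $h^*(x)=x^{\deg h}h(1/x)$. One direction is the observation that for polynomials $a(x)\in\langle g\rangle$ and $b(x)\in\langle h^*\rangle$, the Euclidean inner product of the corresponding coefficient vectors equals (up to sign/shift) a coefficient of the product $a(x)\,\widetilde{b}(x)$ where $\widetilde{b}$ reverses $b$; since reversing $h^*$ gives back $h$ (up to units) and $g(x)h(x)=x^n-1\equiv 0$, all these inner products vanish, giving $\langle h^*(x)\rangle\subseteq\mathcal{C}^{\perp}$. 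For the reverse inclusion I would compare dimensions: $h^*(x)$ has the same degree as $h(x)$, namely $k$, and one checks $h^*(x)\mid x^n-1$ (because $h^*$ is, up to a unit, $x^n-1$ divided by $g^*$, or directly: reversing the identity $g h = x^n-1$), so $\langle h^*(x)\rangle$ has dimension $n-k$; since $\dim\mathcal{C}^{\perp}=n-k$ as well and we have containment, equality follows.

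The main obstacle — really the only place one must be careful rather than merely formal — is the bookkeeping in the inner-product/reciprocal-polynomial identity: one has to track the correct index shift so that the condition "the coefficient vectors of $x^i g(x)$ and $x^j h^*(x)$ are orthogonal for all $i,j$ in the relevant ranges" becomes exactly "the polynomial $g(x)h(x)$ has a band of zero coefficients," modulo $x^n-1$. I would set this up by noting that the Hamming/Euclidean pairing $\langle \overline{a},\overline{b}\rangle$ over cyclic shifts is governed by the product $a(x)b(x^{-1})$ in the group ring $\mathbb{F}_q[x]/\langle x^n-1\rangle$, and $b(x^{-1})$ for $b=h^*$ is a unit multiple of $h(x)$; then $g(x)h(x)\equiv 0$ kills everything at once, which is cleaner than coefficient-chasing. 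Since the statement is cited from \cite{Hill}, I would in fact keep the write-up brief, proving principality and the divisibility $g\mid x^n-1$ in full and then invoking the standard reciprocal-polynomial computation for the dual, referring the reader to \cite{Hill} for the routine index details.
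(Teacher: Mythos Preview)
The paper does not supply a proof of this theorem at all; it is stated with a citation to \cite{Hill} and used as a black box. Your proposal reproduces the standard textbook argument (minimal-degree monic generator via the division algorithm, divisibility $g\mid x^n-1$, and the reciprocal/check-polynomial description of the dual via dimension count), which is correct and is exactly what one finds in \cite{Hill}; you even anticipated this by suggesting the write-up be kept brief with a reference for the index bookkeeping.
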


\begin{theorem}\label{thm dec 1}
A linear code $\mathcal{C}=\bigoplus_{i=1}^e\mu_{i}\mathcal{C}_{i}$ of length $n$ over $R_{e,q}$ is cyclic if and only if $\mathcal{C}_{i}$ is { a} cyclic code of length $n$ over $\mathbb{F}_{q}$, for $1\leq i\leq e$.
\end{theorem}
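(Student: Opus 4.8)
The plan is to exploit the direct-sum decomposition $\mathcal{C}=\bigoplus_{i=1}^e\mu_i\mathcal{C}_i$ together with the fact that the idempotents $\mu_i$ are constants with respect to the variable $x$, so that the cyclic shift $\sigma$ on $R_{e,q}^n$ commutes with multiplication by $\mu_i$. Concretely, identifying codewords with polynomials in $R_{e,q}[x]/\langle x^n-1\rangle$, the shift $\sigma$ is just multiplication by $x$; since $\mu_i\in R_{e,q}$ does not involve $x$, we have $x\cdot(\mu_i f(x))=\mu_i\bigl(x f(x)\bigr)$ for every $f(x)\in\mathbb{F}_q[x]/\langle x^n-1\rangle$. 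This identity is the one structural fact driving both directions.

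For the forward direction, assume $\mathcal{C}$ is cyclic. Take any $s_i\in\mathcal{C}_i$; by definition of $\mathcal{C}_i$ there is a codeword $c=\sum_{j=1}^e\mu_j s_j\in\mathcal{C}$ with this $i$-th component. Since $\mathcal{C}$ is cyclic, $\sigma(c)=x\,c(x)=\sum_{j=1}^e\mu_j\bigl(x s_j(x)\bigr)\in\mathcal{C}$, and reading off the $i$-th component (using uniqueness of the expression $\mathcal{C}^{\perp}$-style, i.e. the CRT decomposition $R_{e,q}\cong\bigoplus\mu_i\mathbb{F}_q$) shows $\sigma(s_i)=x s_i(x)\in\mathcal{C}_i$. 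Hence each $\mathcal{C}_i$ is cyclic over $\mathbb{F}_q$.

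For the converse, assume every $\mathcal{C}_i$ is cyclic. Let $c=\sum_{i=1}^e\mu_i s_i\in\mathcal{C}$ with $s_i\in\mathcal{C}_i$. Then $\sigma(c)=x\,c(x)=\sum_{i=1}^e\mu_i\bigl(x s_i(x)\bigr)$, and $x s_i(x)=\sigma(s_i)\in\mathcal{C}_i$ for each $i$ because $\mathcal{C}_i$ is cyclic; therefore $\sigma(c)\in\bigoplus_{i=1}^e\mu_i\mathcal{C}_i=\mathcal{C}$. So $\mathcal{C}$ is cyclic. One should also note at the outset that $\mathcal{C}$ is already known to be a linear code (an $R_{e,q}$-submodule), so only closure under $\sigma$ needs checking in each direction.

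The only mildly delicate point—and the one I would state carefully rather than the bulk of the work—is the passage "read off the $i$-th component": it relies on the uniqueness of the decomposition $\mathcal{C}=\bigoplus\mu_i\mathcal{C}_i$ and on the orthogonal-idempotent relations $\mu_i\mu_j=\delta_{ij}\mu_i$ from Lemma~\ref{crt}, so that $\mu_i\,\sigma(c)=\mu_i\sigma(s_i)$ isolates the $i$-th summand. Everything else is the routine observation that scalar multiplication by $\mu_i$ commutes with the shift. I expect no real obstacle; the proof is a direct transcription of the componentwise argument already used for duals in Theorem~\ref{th:2}.
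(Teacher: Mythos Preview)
Your proposal is correct and follows essentially the same approach as the paper: both directions rest on the observation that the cyclic shift acts componentwise on the decomposition $c=\sum_i\mu_i s_i$ (the paper writes this in vector coordinates as $\sigma(r)=\sum_i\mu_i\sigma(a^i)$, while you phrase it as $x\cdot(\mu_i f)=\mu_i(xf)$ in the polynomial model), and uniqueness of the idempotent decomposition then lets one read off $\sigma(s_i)\in\mathcal{C}_i$. The only cosmetic difference is that the paper, in the forward direction, picks $a^i\in\mathcal{C}_i$ for all $i$ simultaneously and forms $\sum_i\mu_i a^i\in\mathcal{C}$ directly from the direct-sum description, rather than invoking the projection definition of $\mathcal{C}_i$ as you do; either works.
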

\begin{proof}
Let $\mathcal{C}=\bigoplus_{i=1}^e\mu_{i}\mathcal{C}_{i}$ be a cyclic code of length $n$ over $R_{e,q}$. Let $a^i=(a_0^i,a_1^i,\dots,a_{n-1}^i)\in \mathcal{C}_i$ for $1\leq i\leq e$. Take $r_j=\sum_{i=1}^{e}\mu_ia^i_j$ for $0\leq j\leq n-1$. Then $r=(r_0,r_1,\dots,r_{n-1})\in \mathcal{C}$ and hence $\sigma(r)=(r_{n-1},r_0,\dots,r_{n-2})\in \mathcal{C}$. Again $\sigma(r)=\sum_{i=1}^e\mu_i\sigma(a^i)\in \mathcal{C}=\bigoplus_{i=1}^e\mu_{i}\mathcal{C}_{i}$. Therefore, $\sigma(a^i)\in \mathcal{C}_i$ for $1\leq i\leq e$. Hence $\mathcal{C}_i$ is { a} cyclic code over $\mathbb{F}_q$ for $1\leq i\leq e$.

Conversely, let $\mathcal{C}_i$ be a cyclic code over $\mathbb{F}_q$ for $1\leq i\leq e$. Let $r=(r_0,r_1,\dots,r_{n-1})\in \mathcal{C}$ where $r_j=\sum_{i=1}^{e}\mu_ia^i_j$ for $0\leq j\leq n-1$. Then $a^i=(a_0^i,a_1^i,\dots,a_{n-1}^i)\in \mathcal{C}_i$ for $1\leq i\leq e$. Now, $\sigma(r)=\sum_{i=1}^e\mu_i\sigma(a^i)\in \bigoplus_{i=1}^e\mu_{i}\mathcal{C}_{i}=\mathcal{C}$. Hence $\mathcal{C}$ is a cyclic code of length $n$ over $R_{e,q}$.
\end{proof}

By the help of Theorem \ref{thm gen cyclic field} and Theorem \ref{thm dec 1} we prepare Theorem \ref{thm dec gen 1} to obtain the generator polynomial for cyclic codes.

\begin{theorem}\label{thm dec gen 1}
Let $\mathcal{C}=\bigoplus_{i=1}^e\mu_{i}\mathcal{C}_{i}$ be a cyclic code of length $n$ over $R_{e,q}$. Then there exists a unique polynomial $g(x)\in R_{e,q}[x]$ such that $\mathcal{C}=\langle g(x)\rangle$ and $x^n-1=g(x)h(x)$ where $g(x)=\sum_{i=1}^{e}\mu_ig_i(x)$ and $x^{n}-1=g_i(x)h_i(x)$, for $1\leq i\leq e$.
 \end{theorem}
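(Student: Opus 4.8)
The plan is to build $g(x)$ from the field-level generator polynomials $g_i(x)$ of the component codes $\mathcal{C}_i$ via the idempotent decomposition, and then verify the claimed properties one at a time. First I would invoke Theorem~\ref{thm gen cyclic field} to each $\mathcal{C}_i$: since $\mathcal{C}_i$ is a cyclic code of length $n$ over $\mathbb{F}_q$ (by Theorem~\ref{thm dec 1}), there is a unique monic divisor $g_i(x)$ of $x^n-1$ in $\mathbb{F}_q[x]$ with $\mathcal{C}_i=\langle g_i(x)\rangle$ and $x^n-1=g_i(x)h_i(x)$. Lift these to $R_{e,q}[x]$ by setting $g(x)=\sum_{i=1}^e \mu_i g_i(x)$ and $h(x)=\sum_{i=1}^e \mu_i h_i(x)$.

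Next I would check $\mathcal{C}=\langle g(x)\rangle$ in $R_{e,q}[x]/\langle x^n-1\rangle$. For the inclusion $\langle g(x)\rangle\subseteq\mathcal{C}$: using the orthogonality relations of Lemma~\ref{crt}, $\mu_i g(x)=\mu_i g_i(x)$, so any $R_{e,q}[x]$-multiple of $g(x)$, when expanded through $r=\sum_i \mu_i s_i$, lands in $\bigoplus_i \mu_i \langle g_i(x)\rangle=\bigoplus_i\mu_i\mathcal{C}_i=\mathcal{C}$. For the reverse inclusion: an arbitrary element of $\mathcal{C}$ has the form $\sum_{i=1}^e \mu_i c_i(x)$ with $c_i(x)\in\langle g_i(x)\rangle$, say $c_i(x)=a_i(x)g_i(x)$; then $\sum_i \mu_i c_i(x)=\sum_i \mu_i a_i(x)g_i(x)=\left(\sum_i \mu_i a_i(x)\right)g(x)\in\langle g(x)\rangle$, again by the idempotent relations. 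Then I would verify $x^n-1=g(x)h(x)$: expanding the product and using $\mu_i\mu_j=\delta_{ij}\mu_i$ gives $g(x)h(x)=\sum_i \mu_i g_i(x)h_i(x)=\sum_i \mu_i (x^n-1)=(x^n-1)\sum_i\mu_i=x^n-1$, the last equality by $\sum_i\mu_i=1$ from Lemma~\ref{crt}.

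Finally I would argue uniqueness of $g(x)$. Suppose $\mathcal{C}=\langle g'(x)\rangle$ for another $g'(x)=\sum_i\mu_i g_i'(x)$ with each $g_i'(x)$ a monic divisor of $x^n-1$; multiplying by $\mu_i$ shows $\mu_i\mathcal{C}=\mu_i\langle g_i'(x)\rangle$, and projecting onto the $i$-th component (i.e. reading modulo the other idempotents) identifies $\mathcal{C}_i=\langle g_i'(x)\rangle$ in $\mathbb{F}_q[x]/\langle x^n-1\rangle$, whence $g_i'(x)=g_i(x)$ by the uniqueness part of Theorem~\ref{thm gen cyclic field}, so $g'(x)=g(x)$. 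Strictly one should also note that every ideal of $R_{e,q}[x]/\langle x^n-1\rangle$ decomposes as $\bigoplus_i\mu_i\mathcal{C}_i$ with $\mathcal{C}_i$ an ideal of $\mathbb{F}_q[x]/\langle x^n-1\rangle$, which is exactly the content already established in Section~\ref{sec 2} together with Theorem~\ref{thm dec 1}, so the representation $g(x)=\sum_i\mu_i g_i(x)$ is forced.

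I do not expect a serious obstacle here; the whole argument is a transport of the classical $\mathbb{F}_q$-result through the CRT decomposition, and the only points requiring care are bookkeeping ones: making sure the ``uniqueness'' claim is stated relative to the normalization that each $g_i(x)$ is the monic generator polynomial of $\mathcal{C}_i$ (otherwise $g(x)$ is only unique up to units of $R_{e,q}[x]/\langle x^n-1\rangle$), and being explicit that the reduction-mod-$\mu_j$-for-$j\neq i$ map realizes the isomorphism $\mu_i R_{e,q}[x]/\langle x^n-1\rangle\cong \mathbb{F}_q[x]/\langle x^n-1\rangle$ so that projecting $\mathcal{C}$ genuinely recovers $\mathcal{C}_i$.
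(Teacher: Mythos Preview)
Your proposal is correct and follows essentially the same route as the paper: both invoke Theorem~\ref{thm dec 1} to reduce to the component codes $\mathcal{C}_i$, apply Theorem~\ref{thm gen cyclic field} to obtain the unique $g_i(x)$, set $g(x)=\sum_i\mu_i g_i(x)$, verify the two inclusions $\langle g(x)\rangle=\mathcal{C}$ via the idempotent relations, check $g(x)h(x)=x^n-1$ using $\sum_i\mu_i=1$, and deduce uniqueness of $g(x)$ from that of the $g_i(x)$. Your write-up is in fact a bit more careful than the paper's about the normalization underlying the uniqueness claim.
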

 \begin{proof}
Let $\mathcal{C}=\bigoplus_{i=1}^e\mu_{i}\mathcal{C}_{i}$ be a  cyclic code of length $n$ over $R_{e,q}$.  Then by Theorem \ref{thm dec 1}, $\mathcal{C}_i$ { is a cyclic code} over $\mathbb{F}_q$ for $1\leq i\leq e$. By Theorem \ref{thm gen cyclic field}, we have $\mathcal{C}_i=\langle g_i(x)\rangle$ where $x^n-1=g_i(x)h_i(x)$ for $1\leq i\leq e$. Let $g(x)=\sum_{i=1}^e\mu_ig_i(x)$. Then $\langle g(x)\rangle\subseteq \mathcal{C}$. On the other hand, $\mu_ig_i(x)=\mu_ig(x)\in \langle g(x)\rangle$ for $1\leq i\leq e$. Thus $\mathcal{C}\subseteq \langle g(x)\rangle$. Therefore, $\mathcal{C}=\langle g(x)\rangle$. Since $g_i(x)$ is unique for $1\leq i\leq e$, then $g(x)$ is unique.\\
Further, we have $g(x)\sum_{i=1}^e\mu_ih_i(x)=\sum_{i=1}^e\mu_ig_i(x)h_i(x)=x^n-1\sum_{i=1}^e\mu_i=x^n-1$. Hence $x^n-1=g(x)h(x)$ where $h(x)=\sum_{i=1}^e\mu_ih_i(x)$.
 \end{proof}

\begin{corollary}
Every cyclic code of length $n$ over $R_{e,q}$ is principally generated, that is, $R_{e,q}[x]/\langle x^n-1\rangle$ is a principal ideal ring.
\end{corollary}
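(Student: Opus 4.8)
The plan is to read off the statement essentially directly from Theorem~\ref{thm dec gen 1}, together with the correspondence between cyclic codes of length $n$ over $R_{e,q}$ and ideals of $R_{e,q}[x]/\langle x^n-1\rangle$ established just above. First I would recall from Section~\ref{sec 2} that every linear code $\mathcal{C}$ of length $n$ over $R_{e,q}$ admits the (unique) CRT decomposition $\mathcal{C}=\bigoplus_{i=1}^{e}\mu_i\mathcal{C}_i$; in particular this applies to an arbitrary cyclic code $\mathcal{C}$. Once $\mathcal{C}$ is written in this form, Theorem~\ref{thm dec gen 1} produces a single polynomial $g(x)=\sum_{i=1}^{e}\mu_i g_i(x)\in R_{e,q}[x]$ with $\mathcal{C}=\langle g(x)\rangle$, which is exactly the assertion that $\mathcal{C}$ is principally generated.

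For the second half of the statement, I would invoke the theorem above that a subset of $R_{e,q}[x]/\langle x^n-1\rangle$ is an ideal if and only if the associated code is cyclic. Combining this with the first part, every ideal of $R_{e,q}[x]/\langle x^n-1\rangle$ is of the form $\langle g(x)\rangle$ for one element $g(x)$, i.e. $R_{e,q}[x]/\langle x^n-1\rangle$ is a principal ideal ring.

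I do not anticipate a genuine obstacle here; the proof is a two-line assembly of results already in hand. The one point that needs care is purely organizational: Theorem~\ref{thm dec gen 1} is phrased for a code presented as $\bigoplus_{i=1}^{e}\mu_i\mathcal{C}_i$, so one must first note that the CRT decomposition is available for every cyclic code before applying it. No independent argument is required — in particular, no direct appeal to $\mathbb{F}_q[x]/\langle x^n-1\rangle$ being a principal ideal ring — since that fact enters already through Theorem~\ref{thm gen cyclic field} and is carried over componentwise in Theorem~\ref{thm dec gen 1}.
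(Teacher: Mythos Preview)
Your proposal is correct and matches the paper's approach: the corollary is stated without proof in the paper because it follows immediately from Theorem~\ref{thm dec gen 1} together with the ideal/cyclic-code correspondence, exactly as you describe. The organizational remark about first invoking the CRT decomposition before applying Theorem~\ref{thm dec gen 1} is appropriate, and no further argument is needed.
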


\begin{corollary} Let $\mathcal{C}=\bigoplus_{i=1}^e\mu_{i}\mathcal{C}_{i}$ be a  cyclic code of length $n$ over $R_{e,q}$ and $\mathcal{C}_{i}=\langle g_{i}(x)\rangle$ such that $x^{n}-1=g_{i}(x)h_{i}(x)$ for $1\leq i\leq e$. Then
\begin{enumerate}
\item $\mathcal{C}^{\perp}=\bigoplus_{i=1}^e\mu_{i}\mathcal{C}_{i}^{\perp}$ is a  cyclic code over $R_{e,q}$,
\item $\mathcal{C}^{\perp}=\langle \sum_{i=1}^e\mu_{i}h_{i}^{*}(x)\rangle$, where $h_{i}^{*}(x)$ is the reciprocal polynomial of $h_{i}(x)$, i.e., $h_{i}^{*}(x)=x^{\deg(h_{i}(x))}h_{i}(1/x)$ for $1\leq i\leq e$,
\item $\mid \mathcal{C}^{\perp}\mid=q^{\sum_{i=1}^{e}\deg(g_{i}(x))}$.
\end{enumerate}
\end{corollary}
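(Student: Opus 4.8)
The plan is to derive all three items by assembling the structural results already established: Theorem~\ref{th:2} for the dual decomposition, Theorem~\ref{thm dec 1} for cyclicity of a $\mu_i$-direct sum, Theorem~\ref{thm dec gen 1} for the single generator, and the classical Theorem~\ref{thm gen cyclic field} describing duals of cyclic codes over $\mathbb{F}_q$.

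For item~(1), I would start from Theorem~\ref{th:2}, which already gives $\mathcal{C}^{\perp}=\bigoplus_{i=1}^e\mu_i\mathcal{C}_i^{\perp}$. Since each $\mathcal{C}_i=\langle g_i(x)\rangle$ is cyclic of length $n$ over $\mathbb{F}_q$, Theorem~\ref{thm gen cyclic field} gives $\mathcal{C}_i^{\perp}=\langle h_i^*(x)\rangle$, which is again cyclic over $\mathbb{F}_q$. Feeding this into the ``if'' direction of Theorem~\ref{thm dec 1}, applied to the decomposition $\bigoplus_{i=1}^e\mu_i\mathcal{C}_i^{\perp}$, shows that $\mathcal{C}^{\perp}$ is a cyclic code over $R_{e,q}$.

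For item~(2), I first note that $h_i^*(x)\mid x^n-1$: both $g_i(x)$ and $h_i(x)$ have nonzero constant term (they divide $x^n-1$), so the reciprocal operation is multiplicative on them and, since $(x^n-1)^*=-(x^n-1)$, one gets $x^n-1=-g_i^*(x)\,h_i^*(x)$. Thus $\mathcal{C}^{\perp}=\bigoplus_{i=1}^e\mu_i\mathcal{C}_i^{\perp}$ with $\mathcal{C}_i^{\perp}=\langle h_i^*(x)\rangle$ is exactly the situation of Theorem~\ref{thm dec gen 1} (with $h_i^*$ in the role of the $i$-th component generator), which yields $\mathcal{C}^{\perp}=\langle\sum_{i=1}^e\mu_i h_i^*(x)\rangle$.

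For item~(3), I would use $|\mathcal{C}^{\perp}|=\prod_{i=1}^e|\mathcal{C}_i^{\perp}|$, which follows from the CRT decomposition exactly as $|\mathcal{C}|=\prod_{i=1}^e|\mathcal{C}_i|$ was recorded in Section~\ref{sec 2}. Over $\mathbb{F}_q$, the code $\mathcal{C}_i=\langle g_i(x)\rangle$ has dimension $n-\deg g_i(x)$, so $\dim\mathcal{C}_i^{\perp}=\deg g_i(x)$ and $|\mathcal{C}_i^{\perp}|=q^{\deg g_i(x)}$; multiplying over $i$ gives $|\mathcal{C}^{\perp}|=q^{\sum_{i=1}^e\deg g_i(x)}$. (Equivalently, one may invoke that $R_{e,q}$ is Frobenius, so $|\mathcal{C}|\,|\mathcal{C}^{\perp}|=q^{en}$, and combine it with $|\mathcal{C}|=q^{\,en-\sum_i\deg g_i(x)}$.) The only genuinely non-automatic point is the bookkeeping in item~(2)---verifying $h_i^*(x)\mid x^n-1$ so that Theorem~\ref{thm dec gen 1} applies verbatim, and keeping the reciprocal/degree accounting straight; everything else is a direct application of the earlier theorems.
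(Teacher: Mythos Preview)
Your proof is correct and follows essentially the same route as the paper: both derive (1) from the dual decomposition (Theorem~\ref{th:2}) together with Theorem~\ref{thm dec 1}, obtain (2) by applying the generator construction of Theorem~\ref{thm dec gen 1} to the components $\mathcal{C}_i^{\perp}=\langle h_i^*(x)\rangle$, and get (3) from $|\mathcal{C}^{\perp}|=\prod_i|\mathcal{C}_i^{\perp}|$. Your explicit verification that $h_i^*(x)\mid x^n-1$ is a bit more careful than the paper, which simply invokes the argument of Theorem~\ref{thm dec gen 1} without spelling this out.
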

\begin{proof}
\begin{enumerate}
\item Let $\mathcal{C}=\bigoplus_{i=1}^e\mu_{i}\mathcal{C}_{i}$ be a  cyclic code of length $n$ over $R_{e,q}$. Then by Theorem \ref{thm dec 1}, $\mathcal{C}_i$ is { a} cyclic code over $\mathbb{F}_q$ for $1\leq i\leq e$. Then $\mathcal{C}_i^{\perp}$ is { a} cyclic code over $\mathbb{F}_q$ for $1\leq i\leq e$. By Theorem \ref{thm dec 1}, $\mathcal{C}^{\perp}=\bigoplus_{i=1}^e\mu_{i}\mathcal{C}_{i}^{\perp}$ is a  cyclic code over $R_{e,q}$.

    \item Let $\mathcal{C}=\bigoplus_{i=1}^e\mu_{i}\mathcal{C}_{i}$ be a  cyclic code of length $n$ over $R_{e,q}$. Then by Theorem \ref{thm gen cyclic field}, $\mathcal{C}_i^{\perp}=\langle h^*_i(x)\rangle$ where $x^n-1=g_i(x)h_i(x)$ for $1\leq i\leq e$. Therefore, similar to the proof of Theorem \ref{thm dec gen 1}, we have $\mathcal{C}^{\perp}=\langle \sum_{i=1}^e\mu_ih^*_i(x)\rangle$, where $h_{i}^{*}(x)$ is the reciprocal polynomial of $h_{i}(x)$.
    \item Since $\mid \mathcal{C}_i^{\perp}\mid =q^{\deg(g_i(x))}$, so $\mid \mathcal{C}^{\perp}\mid =\prod_{i=1}^e\mid \mathcal{C}_i^{\perp}\mid =q^{\sum_{i=1}^e\deg(g_i(x))}$.
\end{enumerate}{}
\end{proof}

Now, we present some examples of cyclic codes over $R_{e,q}$ and their $\mathbb{F}_q$-images which include MDS and optimal linear codes. The calculation involved in these examples is carried out by the Magma computation system \cite{Magma}.
\begin{example}
Let $e=2,q=5,n=10$ and $R_{2,5}=\mathbb{F}_5[u]/\langle u^2-1\rangle$. Then $\mu_1=\frac{1+u}{2}$ and $\mu_2=\frac{1-u}{2}$. Let  \[
   M=
  \left[ {\begin{array}{cccc}
   3&2 \\
   2&2
  \end{array} } \right]\in GL_2(\mathbb{F}_5).
\]
Then $M$ satisfies $MM^T=3I_2$ and { the} Gray map $\varphi: R_{2,5}\longrightarrow \mathbb{F}_5^2$ is defined by
\begin{align*}
    \varphi(a_0+ua_1)=(a_0+a_1,a_0-a_1)\left[ {\begin{array}{cccc}
   3&2 \\
   2&2
  \end{array} } \right].
\end{align*}
Again
\begin{align*}
    x^{10}-1= (x + 1)^5(x + 4)^5\in \mathbb{F}_5[x].
\end{align*}{}
Let $\mathcal{C}=\langle \mu_1g_1(x)+\mu_2g_2(x)\rangle$ be a cyclic code of length $10$ over $R_{2,5}$ where $g_1(x)=x + 4$ and $g_2(x)=(x+1)(x+4)^3=x^4 + 3x^3 + 2x + 4$. Therefore, $\varphi(\mathcal{C})$ is a $[20,15,4]$ linear code over $\mathbb{F}_5$ and as per the database \cite{Grassltable}, it is an optimal linear code.
\end{example}

\begin{example}
Let $e=2,q=11,n=5$ and $R_{2,11}=\mathbb{F}_{11}[u]/\langle u^2-1\rangle$. Now, in $\mathbb{F}_{11}[x]$, we have
\begin{align*}
    x^{5}-1=(x + 2)(x + 6)(x + 7)(x + 8)(x + 10).
\end{align*}
Let $g_1(x)=(x+8)(x+10)=x^2 + 7x + 3$ and $g_2(x)=(x + 6)(x+7)=x^2 + 2x + 9$. Then $\mathcal{C}=\langle \mu_1g_1(x)+\mu_2g_2(x)\rangle$ { is} a cyclic code of length $5$ over $R_{2,11}$. Again, let \[
   M=
  \left[ {\begin{array}{cccc}
   9&2 \\
   2&2
  \end{array} } \right]\in GL_2(\mathbb{F}_{11}),
\]
satisfying $MM^T=8I_2$. Therefore, the Gray image $\varphi(\mathcal{C})$ { has parameters} $[10,6,5]$, which is an MDS code satisfying $n-k+1-d=0$.
\end{example}

\begin{example}
Let $\mathcal{C}=\langle \mu_1g_1(x)+\mu_2g_2(x)\rangle$ be a cyclic code of length $n$ over $R_{2,q}=\mathbb{F}_q[u]/\langle u^2-1\rangle$, where $g_i(x)\mid x^n-1$ for $i=1,2$. Let  \[
   M=
  \left[ {\begin{array}{cccc}
   -2&2 \\
   2&2
  \end{array} } \right]\in GL_2(\mathbb{F}_{q}),
\]
satisfying $MM^T=8I_2$. Now, in Table \ref{tab1} and Table \ref{tab2}, the generator polynomials $g_i(x)~(i=1,2)$ are presented in the third and fourth column, respectively and the Gray images $\varphi(\mathcal{C})$ are included in the fifth column. As per the database \cite{Grassltable}, the codes obtained in the fifth column of Table \ref{tab1} are optimal and { the} best-known linear codes (BKLC). On the other hand, the codes obtained in the fifth column of Table \ref{tab2} attaining the singleton bound $n-k+1-d=0$ are MDS.
\end{example}

\section{LCD codes over $R_{e,q}$}
In { the} previous section we characterized cyclic codes of length $n$ over $R_{e,q}$. Here, using their structure we construct LCD codes { of length $n$ (Corollary \ref{cor lcd} and \ref{cor rev})}. Moreover, we discuss about the existence of free and non-free LCD codes. Recall that a linear code $\mathcal{C}$ of length $n$ is said to be a \textit{linear complementary dual} (shortly, LCD) code if $\mathcal{C}\cap \mathcal{C}^{\perp}=\{0\}$.

\begin{df}
A cyclic code $\mathcal{C}$ is said to be a reversible cyclic code if for any $(c_0,c_1,\dots,c_{n-1})\in \mathcal{C}$, its reverse translated codeword $(c_{n-1},c_{n-2},\dots,c_1,c_0)$ also is in $\mathcal{C}$.
\end{df}
Recall that a cyclic code of length $n$ over a field $\mathbb{F}_q$ is a reversible cyclic code if it is generated by a self-reciprocal factor of $x^n-1$. Also, LCD codes over  $\mathbb{F}_q$ are characterized by Yang and Massey \cite{Yang94}. Using these results here we determine the structure of LCD codes over $R_{e,q}$. Now, we recall the following results.

\begin{lemma}\cite{Yang94} \label{lem lcd field}
Let $\mathcal{C}=\langle g(x)\rangle$ be a cyclic code of length $n$ over $\mathbb{F}_q$.
\begin{enumerate}
\item Let $\gcd(n,q)\neq 1$. Then $\mathcal{C}$ is an LCD code if and only if $g(x)$ is self-reciprocal and all monic irreducible factors of $g(x)$ have the same multiplicity in $g(x)$ and $x^n-1$.

    \item Let $\gcd(n,q)=1$. Then { $\mathcal{C}$ is an LCD code if and only if it is a reversible code}.

\end{enumerate}
\end{lemma}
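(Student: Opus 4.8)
The plan is to translate both statements into divisibility conditions among generator polynomials and then unwind them. By Theorem~\ref{thm gen cyclic field}, write $x^{n}-1=g(x)h(x)$, so that $\mathcal{C}=\langle g(x)\rangle$ and $\mathcal{C}^{\perp}=\langle h^{*}(x)\rangle$; since $x^{n}-1$ is self-reciprocal, $h^{*}(x)$ again divides $x^{n}-1$. In $\mathbb{F}_{q}[x]/\langle x^{n}-1\rangle$ the intersection of the ideals generated by two divisors $a(x),b(x)$ of $x^{n}-1$ is generated by $\operatorname{lcm}(a(x),b(x))$, hence
\[
\mathcal{C}\cap\mathcal{C}^{\perp}=\bigl\langle \operatorname{lcm}\!\bigl(g(x),h^{*}(x)\bigr)\bigr\rangle .
\]
As $\operatorname{lcm}(g,h^{*})\mid x^{n}-1$, this ideal is $\{0\}$ iff $\operatorname{lcm}(g,h^{*})=x^{n}-1$ up to a unit, and since $\deg g+\deg h^{*}=\deg g+\deg h=n$ this happens iff $\gcd\bigl(g(x),h^{*}(x)\bigr)=1$. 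So the lemma reduces entirely to deciding when $g$ and $h^{*}$ are coprime, in the two cases.

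For part~(2): $\gcd(n,q)=1$ forces $x^{n}-1$ to be squarefree, so $g$ and $h$ are coprime and their monic irreducible factors partition those of $x^{n}-1$, while the monic irreducible factors of $h^{*}$ are exactly the monic normalizations of the reciprocals of the factors of $h$. Thus $\gcd(g,h^{*})=1$ iff no reciprocal of an irreducible factor of $h$ divides $g$, iff every such reciprocal divides $h$, iff the set of irreducible factors of $h$ (equivalently of $g=(x^{n}-1)/h$) is closed under the reciprocal map, iff $g$ is self-reciprocal. By the recalled characterization of reversible cyclic codes over $\mathbb{F}_{q}$, this is exactly the statement that $\mathcal{C}$ is reversible.

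For part~(1): write $n=p^{a}m$ with $p=\operatorname{char}\mathbb{F}_{q}$ and $\gcd(m,q)=1$, so that $x^{n}-1=\prod_{i}f_{i}(x)^{p^{a}}$ with the $f_{i}$ the distinct monic irreducible divisors of $x^{m}-1$. Then $g=\prod_{i}f_{i}^{e_{i}}$ and $h=\prod_{i}f_{i}^{\,p^{a}-e_{i}}$ with $0\le e_{i}\le p^{a}$, and if $\tau$ denotes the involution on indices determined by $f_{i}^{*}\sim f_{\tau(i)}$, then $h^{*}\sim\prod_{i}f_{i}^{\,p^{a}-e_{\tau(i)}}$, so
\[
\gcd\bigl(g,h^{*}\bigr)=\prod_{i}f_{i}^{\,\min\!\bigl(e_{i},\,p^{a}-e_{\tau(i)}\bigr)} .
\]
Coprimality thus becomes the pointwise condition: for every $i$, $e_{i}=0$ or $e_{\tau(i)}=p^{a}$. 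A short case analysis using $\tau^{2}=\operatorname{id}$ shows this is equivalent to the conjunction of (a) $e_{i}\in\{0,p^{a}\}$ for all $i$ and (b) $e_{i}=e_{\tau(i)}$ for all $i$. Here (a) says precisely that every monic irreducible factor of $g$ has the same multiplicity in $g$ and in $x^{n}-1$, and (b) says precisely that $g$ is self-reciprocal (since $g^{*}\sim\prod_{i}f_{i}^{\,e_{\tau(i)}}$); together they give the stated criterion.

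The standard dictionary — intersection of cyclic codes $\leftrightarrow$ lcm of generators, dual $\leftrightarrow$ reciprocal of the check polynomial — does most of the work and is routine. The one point that needs genuine care is part~(1): tracking the reciprocal-induced involution $\tau$ on the irreducible factors of $x^{m}-1$ and verifying that the single pointwise condition ``$e_{i}=0$ or $e_{\tau(i)}=p^{a}$'' really does collapse to conditions (a) and (b). Everything else is mechanical.
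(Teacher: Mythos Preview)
The paper does not prove this lemma at all: it is quoted with the citation \cite{Yang94} and used as a black box, so there is no ``paper's own proof'' to compare against. Your argument is correct and is essentially the classical Yang--Massey argument: reduce $\mathcal{C}\cap\mathcal{C}^{\perp}=\{0\}$ to $\gcd(g,h^{*})=1$ via the lcm description of intersections in $\mathbb{F}_{q}[x]/\langle x^{n}-1\rangle$, then analyse this gcd factorwise. The squarefree case is immediate, and your handling of the repeated-root case via the involution $\tau$ and the equivalence ``$e_{i}=0$ or $e_{\tau(i)}=p^{a}$ for all $i$'' $\Longleftrightarrow$ ``$e_{i}\in\{0,p^{a}\}$ and $e_{i}=e_{\tau(i)}$'' is clean and correct (the case split on $\tau(i)=i$ versus $\tau(i)\neq i$ goes through exactly because $a\geq 1$ forces $0\neq p^{a}$). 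Nothing to fix.
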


\begin{theorem}\label{thm lcd ring}
Let $\mathcal{C}=\bigoplus_{i=1}^e\mu_{i}\mathcal{C}_{i}$ be a cyclic code of length $n$ over $R_{e,q}$. Then $\mathcal{C}$ is an LCD code if and only if $\mathcal{C}_i$ is an { LCD cyclic code of length} $n$ over $\mathbb{F}_{q}$ for $1\leq i\leq e$.
\end{theorem}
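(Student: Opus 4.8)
The plan is to reduce the LCD condition on $\mathcal{C}$ to the corresponding condition on each constituent $\mathcal{C}_i$, using the orthogonal idempotent decomposition throughout. First I would recall from Theorem~\ref{th:2} (and its cyclic refinement in the preceding corollary) that $\mathcal{C}^{\perp}=\bigoplus_{i=1}^e\mu_i\mathcal{C}_i^{\perp}$, and from Theorem~\ref{thm dec 1} that each $\mathcal{C}_i^{\perp}$ (equivalently each $\mathcal{C}_i$) is cyclic over $\mathbb{F}_q$. Thus both $\mathcal{C}$ and $\mathcal{C}^{\perp}$ are expressed as direct sums with respect to the same system $\{\mu_1,\dots,\mu_e\}$ of orthogonal idempotents given by Lemma~\ref{crt}.

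The key step is to prove the identity $\mathcal{C}\cap\mathcal{C}^{\perp}=\bigoplus_{i=1}^e\mu_i\bigl(\mathcal{C}_i\cap\mathcal{C}_i^{\perp}\bigr)$. The inclusion $\supseteq$ is immediate. For $\subseteq$, I would take $c\in\mathcal{C}\cap\mathcal{C}^{\perp}$ and write it in the unique component form $c=\sum_{i=1}^e\mu_i a_i$ with $a_i\in\mathbb{F}_q^n$; since $c\in\mathcal{C}=\bigoplus\mu_i\mathcal{C}_i$ uniqueness forces $a_i\in\mathcal{C}_i$, and since $c\in\mathcal{C}^{\perp}=\bigoplus\mu_i\mathcal{C}_i^{\perp}$ it forces $a_i\in\mathcal{C}_i^{\perp}$ (here one invokes the explicit identification $\mathcal{S}_i=\mathcal{C}_i^{\perp}$ from the proof of Theorem~\ref{th:2}). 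Multiplying by $\mu_j$ and using $\mu_i\mu_j=\delta_{ij}\mu_i$ isolates the $j$-th summand, so each $\mu_i a_i$ lies in $\mu_i(\mathcal{C}_i\cap\mathcal{C}_i^{\perp})$, and the identity follows.

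With this in hand the theorem is essentially immediate: by uniqueness of the idempotent expansion of the zero codeword, $\mathcal{C}\cap\mathcal{C}^{\perp}=\{0\}$ if and only if $\mu_i(\mathcal{C}_i\cap\mathcal{C}_i^{\perp})=\{0\}$ for every $i$, and since $a\mapsto\mu_i a$ is injective on $\mathbb{F}_q^n$ (as $\mu_i\mathbb{F}_q\cong\mathbb{F}_q$ is a nonzero summand), this is equivalent to $\mathcal{C}_i\cap\mathcal{C}_i^{\perp}=\{0\}$ for all $i$, i.e.\ to each $\mathcal{C}_i$ being an LCD code; combined with the cyclicity of the $\mathcal{C}_i$ noted above, this is the claimed equivalence. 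The only genuinely delicate point is the bookkeeping in the key step — ensuring that ``$c\in\mathcal{C}^{\perp}$'' truly forces each component $a_i$ into $\mathcal{C}_i^{\perp}$, and not merely into some larger space — which is exactly what the description of $\mathcal{C}^{\perp}$ from Theorem~\ref{th:2} supplies; everything else is routine manipulation of orthogonal idempotents. If one wants the further description of each $\mathcal{C}_i$ via self-reciprocal generator polynomials, Lemma~\ref{lem lcd field} can then be applied componentwise.
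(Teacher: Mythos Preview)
Your proposal is correct and follows essentially the same approach as the paper: the paper's proof consists of the single sentence ``Note that $\mathcal{C}\cap\mathcal{C}^{\perp}=\{0\}$ if and only if $\mathcal{C}_i\cap\mathcal{C}_i^{\perp}=\{0\}$,'' which is precisely the key identity you establish via the idempotent decomposition and Theorem~\ref{th:2}. Your argument is simply a more explicit version of the same reduction.
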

\begin{proof}
Note that $\mathcal{C}\cap \mathcal{C}^{\perp}=\{0\}$ if and only if $\mathcal{C}_i\cap \mathcal{C}_i^{\perp}=\{0\}$. Hence the result follows.
\end{proof}

\begin{corollary}\label{cor lcd}
Let $\gcd(n,q)\neq 1$. Also, with the notations of Theorem \ref{thm dec gen 1}, let $\mathcal{C}=\langle \sum_{i=1}^{e}\mu_ig_i(x)\rangle$ be a cyclic code of length $n$ over $R_{e,q}$, where  $g_{i}(x)\in \mathbb{F}_{q}[x]$ and $g_{i}(x)\mid (x^{n}-1)$ for $1\leq i\leq e$. Then $\mathcal{C}$ is an LCD code if and only if $g_{i}(x)$ is self-reciprocal and each monic irreducible factor of $g_{i}(x)$ has the same multiplicity in $g_{i}(x)$ and in $(x^{n}-1)$, for $1\leq i\leq e$.
\end{corollary}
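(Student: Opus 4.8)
The plan is to reduce the statement to the finite-field case via the CRT decomposition and then invoke the Yang--Massey characterization. First I would recall from Theorem~\ref{thm dec gen 1} that $\mathcal{C}=\langle\sum_{i=1}^{e}\mu_ig_i(x)\rangle$ decomposes as $\mathcal{C}=\bigoplus_{i=1}^{e}\mu_i\mathcal{C}_i$ with $\mathcal{C}_i=\langle g_i(x)\rangle$ a cyclic code of length $n$ over $\mathbb{F}_q$; since each $g_i(x)\mid (x^n-1)$ by hypothesis, $g_i(x)$ is exactly the monic generator polynomial of $\mathcal{C}_i$ in the sense of Theorem~\ref{thm gen cyclic field}, so that Lemma~\ref{lem lcd field} applies verbatim to $\mathcal{C}_i$.

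Next I would apply Theorem~\ref{thm lcd ring}, which states that $\mathcal{C}$ is an LCD code if and only if every component code $\mathcal{C}_i$ is an LCD cyclic code of length $n$ over $\mathbb{F}_q$. If one wants the argument fully self-contained at this point, I would spell out the mechanism: by Theorem~\ref{th:2} we have $\mathcal{C}^{\perp}=\bigoplus_{i=1}^{e}\mu_i\mathcal{C}_i^{\perp}$, hence $\mathcal{C}\cap\mathcal{C}^{\perp}=\bigoplus_{i=1}^{e}\mu_i(\mathcal{C}_i\cap\mathcal{C}_i^{\perp})$ by the orthogonality of the idempotents $\mu_i$ (Lemma~\ref{crt}), and the direct sum on the right is $\{0\}$ precisely when each summand $\mathcal{C}_i\cap\mathcal{C}_i^{\perp}$ is $\{0\}$.

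Finally, since $\gcd(n,q)\neq 1$ is assumed, I would apply Lemma~\ref{lem lcd field}(1) to each $\mathcal{C}_i=\langle g_i(x)\rangle$ for $1\leq i\leq e$: $\mathcal{C}_i$ is an LCD code if and only if $g_i(x)$ is self-reciprocal and every monic irreducible factor of $g_i(x)$ has the same multiplicity in $g_i(x)$ and in $x^n-1$. Conjoining these $e$ equivalences and combining with the equivalence from the previous paragraph yields the stated characterization of $\mathcal{C}$.

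I do not anticipate a genuine obstacle: the result is a formal consequence of Theorem~\ref{thm lcd ring} together with Lemma~\ref{lem lcd field}. The only points deserving a line of care are (i) using the divisibility hypothesis $g_i(x)\mid(x^n-1)$ so that $\langle g_i(x)\rangle$ is the cyclic code whose \emph{generator polynomial} Lemma~\ref{lem lcd field} refers to rather than an arbitrary ideal generator, and (ii) justifying the componentwise splitting of $\mathcal{C}\cap\mathcal{C}^{\perp}$, which is immediate from the pairwise-orthogonal idempotent structure but should be stated explicitly if Theorem~\ref{thm lcd ring} is not taken as a black box.
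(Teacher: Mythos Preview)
Your proposal is correct and follows exactly the paper's approach: the paper's proof is the single line ``It follows from part 1 of Lemma~\ref{lem lcd field} and Theorem~\ref{thm lcd ring},'' which is precisely the reduction you outline. Your additional remarks about unpacking Theorem~\ref{thm lcd ring} via Theorem~\ref{th:2} and Lemma~\ref{crt}, and about the divisibility hypothesis ensuring $g_i(x)$ is the genuine generator polynomial, are sound elaborations that the paper leaves implicit.
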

\begin{proof}
It follows from part 1 of Lemma \ref{lem lcd field} and Theorem \ref{thm lcd ring}.
\end{proof}

\begin{theorem}\label{thm reversible}
Let $\gcd(n,q)=1$ and $\mathcal{C}=\bigoplus_{i=1}^e\mu_{i}\mathcal{C}_{i}$ be a cyclic code of length $n$ over $R_{e,q}$. Then $\mathcal{C}$ is an LCD code if and only if $\mathcal{C}_{i}$ is a reversible code of length $n$ over $\mathbb{F}_{q}$ for $1\leq i\leq e$.
\end{theorem}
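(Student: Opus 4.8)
The plan is to pass through the Chinese Remainder decomposition and reduce everything to the field case already settled by Yang and Massey. Recall from Theorem~\ref{th:2} that $\mathcal{C}^{\perp}=\bigoplus_{i=1}^e\mu_i\mathcal{C}_i^{\perp}$, and since the $\mu_i$ are orthogonal idempotents, $\mathcal{C}\cap\mathcal{C}^{\perp}=\bigoplus_{i=1}^e\mu_i(\mathcal{C}_i\cap\mathcal{C}_i^{\perp})$; hence $\mathcal{C}$ is LCD precisely when each $\mathcal{C}_i$ is, which is exactly the content of Theorem~\ref{thm lcd ring}. By Theorem~\ref{thm dec 1} each $\mathcal{C}_i$ is a cyclic code of length $n$ over $\mathbb{F}_q$, so the statement to prove reduces to: for $\gcd(n,q)=1$, a cyclic code over $\mathbb{F}_q$ is LCD if and only if it is reversible — which is precisely part~2 of Lemma~\ref{lem lcd field}.

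So the first step is to invoke Theorem~\ref{thm lcd ring} to distribute the LCD property over the CRT components. The second step is to apply part~2 of Lemma~\ref{lem lcd field} to each $\mathcal{C}_i$, noting that the hypothesis $\gcd(n,q)=1$ is inherited by every component since they all have length $n$. Chaining the two equivalences yields $\mathcal{C}$ is LCD $\iff$ every $\mathcal{C}_i$ is LCD $\iff$ every $\mathcal{C}_i$ is reversible, which is the claim.

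Should a self-contained argument be preferred, the second step can be replaced as follows. Since $\gcd(n,q)=1$, the polynomial $x^n-1$ is squarefree and self-reciprocal up to sign, and by Theorem~\ref{thm gen cyclic field} we may write $\mathcal{C}_i=\langle g_i(x)\rangle$, $\mathcal{C}_i^{\perp}=\langle h_i^*(x)\rangle$ with $x^n-1=g_i(x)h_i(x)$; note $h_i^*(x)\mid x^n-1$ because its roots are the inverses of the roots of $h_i$, which are again $n$-th roots of unity. In $\mathbb{F}_q[x]/\langle x^n-1\rangle$ one has $\mathcal{C}_i\cap\mathcal{C}_i^{\perp}=\langle\operatorname{lcm}(g_i,h_i^*)\rangle$, which is $\{0\}$ if and only if $\operatorname{lcm}(g_i,h_i^*)$ equals $x^n-1$ up to a scalar, equivalently $\gcd(g_i,h_i^*)=1$. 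Because the root sets of $g_i$ and $h_i$ partition the $n$-th roots of unity, this last condition holds exactly when the root set of $g_i$ is closed under inversion, i.e.\ when $g_i$ is self-reciprocal, i.e.\ when $\mathcal{C}_i$ is reversible.

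There is no genuine obstacle here: both ingredients — the component-wise description of the LCD property and the Yang–Massey characterization over fields — are already available. The only points deserving a moment's care are that reversibility of $\mathcal{C}_i$ as used in the conclusion is literally the field-level reversibility of Lemma~\ref{lem lcd field}, and that passing to CRT components preserves the coprimality hypothesis; both are immediate because every $\mathcal{C}_i$ again has length $n$.
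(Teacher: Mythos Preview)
Your proposal is correct and follows exactly the paper's own route: it invokes Theorem~\ref{thm lcd ring} to reduce the LCD property componentwise, then applies part~2 of Lemma~\ref{lem lcd field} to each $\mathcal{C}_i$. The optional self-contained argument you add is a nice bonus but not needed, since the paper's proof is literally the one-line combination of those two results.
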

\begin{proof}
It follows from the combination of part 2 of Lemma \ref{lem lcd field} and Theorem \ref{thm lcd ring}.
\end{proof}
\begin{corollary}\label{cor rev}
Let $\gcd(n,q)=1$ and $\mathcal{C}=\bigoplus_{i=1}^e\mu_{i}\mathcal{C}_{i}$ be a cyclic code of length $n$ over $R_{e,q}$ where $\mathcal{C}_{i}=\langle g_{i}(x)\rangle$ is a cyclic code of length $n$ over $\mathbb{F}_{q}$ for $1\leq i\leq e$. Then $\mathcal{C}$ is an LCD code if and only if $g_{i}(x)$ is self-reciprocal polynomial in $\mathbb{F}_{q}$, for $1\leq i\leq e$.
\end{corollary}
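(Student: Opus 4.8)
The plan is to deduce Corollary \ref{cor rev} directly from Theorem \ref{thm reversible} by translating the reversibility condition on each constituent code $\mathcal{C}_i$ into an algebraic condition on its generator polynomial $g_i(x)$. Since $\gcd(n,q)=1$, the polynomial $x^n-1$ is squarefree over $\mathbb{F}_q$, so every divisor $g_i(x)$ of $x^n-1$ is a product of distinct monic irreducible factors. By Theorem \ref{thm reversible}, $\mathcal{C}$ is an LCD code if and only if each $\mathcal{C}_i$ is reversible, so it suffices to prove: for $\gcd(n,q)=1$, the cyclic code $\mathcal{C}_i=\langle g_i(x)\rangle$ is reversible if and only if $g_i(x)$ is self-reciprocal.

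First I would recall that for a codeword corresponding to a polynomial $c(x)=c_0+c_1x+\dots+c_{n-1}x^{n-1}$, the reversed codeword $(c_{n-1},\dots,c_1,c_0)$ corresponds to $x^{n-1}c(1/x)$ modulo $x^n-1$; up to a unit power of $x$ this is the reciprocal polynomial $c^*(x)$. Hence $\mathcal{C}_i$ is reversible if and only if $c(x)\in\mathcal{C}_i$ implies $c^*(x)\in\mathcal{C}_i$, i.e. $\mathcal{C}_i^* \subseteq \mathcal{C}_i$ (and by a dimension/cardinality count, equality). For the forward direction, if $\mathcal{C}_i$ is reversible then in particular $g_i^*(x)\in\mathcal{C}_i=\langle g_i(x)\rangle$, so $g_i(x)\mid g_i^*(x)$; since $\deg g_i^* = \deg g_i$ and both are (up to the leading-coefficient normalization, which is a unit in $\mathbb{F}_q$) monic of the same degree, this forces $g_i^*(x)=g_i(x)$, i.e. $g_i(x)$ is self-reciprocal. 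For the converse, if $g_i(x)$ is self-reciprocal, then for any $c(x)=a(x)g_i(x)\in\mathcal{C}_i$ we get $c^*(x)=a^*(x)g_i^*(x)=a^*(x)g_i(x)\in\langle g_i(x)\rangle=\mathcal{C}_i$ (again modulo $x^n-1$ and up to a unit power of $x$, which lies in $\mathcal{C}_i$ since multiplication by $x$ is the cyclic shift), hence $\mathcal{C}_i$ is reversible.

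Putting these together: combining the equivalence just established with Theorem \ref{thm reversible}, $\mathcal{C}$ is an LCD code if and only if each $g_i(x)$, $1\le i\le e$, is self-reciprocal, which is exactly the statement. I would phrase the proof as essentially one line once Theorem \ref{thm reversible} is invoked, namely ``It follows from Theorem \ref{thm reversible} together with the standard fact that a cyclic code over $\mathbb{F}_q$ with $\gcd(n,q)=1$ is reversible if and only if its generator polynomial is self-reciprocal,'' and optionally include the short reciprocal-polynomial argument above for completeness.

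The only genuinely delicate point — and the one I would be careful to state correctly rather than gloss over — is the normalization issue: reversing a codeword produces $x^{n-1}c(1/x)$, which differs from the honestly-defined reciprocal polynomial $c^*(x)=x^{\deg c}c(1/x)$ by a factor $x^{n-1-\deg c}$, a unit in $R_{e,q}[x]/\langle x^n-1\rangle$. One must check this shift does not affect membership in the ideal $\langle g_i(x)\rangle$ (it does not, since the ideal is shift-invariant), and that $g_i$ being self-reciprocal is equivalent to $\mathcal{C}_i$ being closed under $c\mapsto c^*$. This is routine but is the place where a sloppy argument could go wrong, so I would make the reduction to $c^*$ explicit before concluding.
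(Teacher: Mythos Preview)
Your proposal is correct and follows exactly the line the paper intends: the paper states Corollary~\ref{cor rev} without an explicit proof, relying on Theorem~\ref{thm reversible} together with the fact recalled just before Lemma~\ref{lem lcd field} that a cyclic code over $\mathbb{F}_q$ (with $\gcd(n,q)=1$) is reversible if and only if its generator is a self-reciprocal factor of $x^n-1$. Your write-up simply supplies the details of that standard fact; the only cosmetic slip is that the unit $x^{n-1-\deg c}$ should be viewed in $\mathbb{F}_q[x]/\langle x^n-1\rangle$ rather than $R_{e,q}[x]/\langle x^n-1\rangle$, since at that point you are working inside $\mathcal{C}_i$.
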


\begin{lemma}\label{lemma lcd gray}
Let $\mathcal{C}$ be a linear code of length $n$ over $R_{e,q}$. Then $\varphi(\mathcal{C}\cap \mathcal{C}^{\perp})=\varphi(\mathcal{C})\cap \varphi(\mathcal{C})^{\perp}$, where $\varphi$ is the Gray map defined in equation (\ref{map 1}).
\end{lemma}
\begin{proof}
Let $a\in \varphi(\mathcal{C})\cap \varphi(\mathcal{C}^{\perp})$. Then there exist $c\in \mathcal{C}$ and $c'\in \mathcal{C}^{\perp}$ such that $\varphi(c)=\varphi(c')=a$. Now, $\varphi$ being injective, we have $c=c'\in \mathcal{C}\cap \mathcal{C}^{\perp}$. Therefore, $a=\varphi(c)\in \varphi(\mathcal{C}\cap \mathcal{C}^{\perp})$. Hence, $\varphi(\mathcal{C})\cap \varphi(\mathcal{C}^{\perp})\subseteq \varphi(\mathcal{C}\cap \mathcal{C}^{\perp})$. \\
On the other side, let $a\in \varphi(\mathcal{C}\cap \mathcal{C}^{\perp})$. Then there exists $c\in \mathcal{C}\cap \mathcal{C}^{\perp}$ such that $\varphi(c)=a$. Again, $c\in \mathcal{C}\cap \mathcal{C}^{\perp}$ implies that $c\in \mathcal{C}$ and $c\in \mathcal{C}^{\perp}$. Therefore, $\varphi(c)\in \varphi(\mathcal{C})$ and $\varphi(c)\in \varphi(\mathcal{C}^{\perp})$. Hence, $a=\varphi(c)\in \varphi(\mathcal{C})\cap \varphi(\mathcal{C}^{\perp})$. This shows that $\varphi(\mathcal{C}\cap \mathcal{C}^{\perp})\subseteq \varphi(\mathcal{C})\cap \varphi(\mathcal{C}^{\perp})$. Hence, combining both sides we have  $\varphi(\mathcal{C})\cap \varphi(\mathcal{C}^{\perp})= \varphi(\mathcal{C}\cap \mathcal{C}^{\perp})$. Also, by Theorem  \ref{thm gray image}, $\varphi(\mathcal{C}^{\perp})=\varphi(\mathcal{C})^{\perp}$. Thus, $\varphi(\mathcal{C})\cap \varphi(\mathcal{C})^{\perp}= \varphi(\mathcal{C}\cap \mathcal{C}^{\perp})$.
\end{proof}

\begin{theorem}\label{thm gr lcd}
Let $\mathcal{C}$ be a linear code of length $n$ over $R_{e,q}$. Then $\mathcal{C}$ is an LCD code if and only if its Gray image $\varphi(\mathcal{C})$ is an LCD code of length $en$ over $\mathbb{F}_{q}$.
\end{theorem}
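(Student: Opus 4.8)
The plan is to deduce Theorem~\ref{thm gr lcd} directly from Lemma~\ref{lemma lcd gray} together with the basic facts about $\varphi$ already established in Theorem~\ref{thm gray image}. First I would recall that, by definition, $\mathcal{C}$ is an LCD code over $R_{e,q}$ precisely when $\mathcal{C}\cap \mathcal{C}^{\perp}=\{0\}$, and likewise $\varphi(\mathcal{C})$ is an LCD code over $\mathbb{F}_q$ precisely when $\varphi(\mathcal{C})\cap \varphi(\mathcal{C})^{\perp}=\{0\}$. By Lemma~\ref{lemma lcd gray} these two intersections are related by $\varphi(\mathcal{C}\cap \mathcal{C}^{\perp})=\varphi(\mathcal{C})\cap \varphi(\mathcal{C})^{\perp}$, so the whole statement reduces to checking that $\mathcal{C}\cap \mathcal{C}^{\perp}=\{0\}$ if and only if $\varphi(\mathcal{C}\cap \mathcal{C}^{\perp})=\{0\}$.

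Next I would invoke the fact that $\varphi$ is a bijective (indeed $\mathbb{F}_q$-linear and isometric) map from $R_{e,q}^{n}$ onto $\mathbb{F}_q^{en}$, as used repeatedly in Theorem~\ref{thm gray image}. Since $\varphi$ is injective and $\varphi(0)=0$, for any subset (in particular any submodule) $D\subseteq R_{e,q}^{n}$ we have $\varphi(D)=\{0\}$ if and only if $D=\{0\}$. Applying this with $D=\mathcal{C}\cap \mathcal{C}^{\perp}$ closes the argument: $\mathcal{C}\cap \mathcal{C}^{\perp}=\{0\} \iff \varphi(\mathcal{C}\cap \mathcal{C}^{\perp})=\{0\} \iff \varphi(\mathcal{C})\cap \varphi(\mathcal{C})^{\perp}=\{0\}$, which is exactly the claimed equivalence.

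Note also that $\varphi(\mathcal{C})$ has length $en$ over $\mathbb{F}_q$ by part (1) of Theorem~\ref{thm gray image}, so the length claim in the statement is already accounted for. Strictly speaking there is nothing left that could be called a genuine obstacle here; the only point requiring care is that one should not accidentally claim $\varphi(A\cap B)=\varphi(A)\cap\varphi(B)$ for arbitrary $A,B$ (which would be false for a non-surjective map) — this is precisely why Lemma~\ref{lemma lcd gray}, which exploits that $\varphi^{-1}$ preserves the orthogonality pairing via $MM^{T}=\gamma I_e$, is needed rather than a naive set-theoretic identity. Once Lemma~\ref{lemma lcd gray} is in hand, the remaining step is purely the observation that an injective linear map sends a nonzero module to a nonzero image, so the proof is short.
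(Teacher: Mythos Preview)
Your proof is correct and essentially identical to the paper's: both invoke Lemma~\ref{lemma lcd gray} to get $\varphi(\mathcal{C}\cap\mathcal{C}^{\perp})=\varphi(\mathcal{C})\cap\varphi(\mathcal{C})^{\perp}$ and then use the injectivity of $\varphi$ to conclude. One small quibble on your closing remark: the set identity $\varphi(A\cap B)=\varphi(A)\cap\varphi(B)$ already holds for any \emph{injective} map, so the genuine subtlety handled by Lemma~\ref{lemma lcd gray} is the passage from $\varphi(\mathcal{C}^{\perp})$ to $\varphi(\mathcal{C})^{\perp}$ (which indeed needs $MM^{T}=\gamma I_e$ via Theorem~\ref{thm gray image}), not the intersection identity itself.
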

\begin{proof}
Let $\mathcal{C}$ be an LCD code of length $n$ over $R_{e,q}$. Then $\mathcal{C}\cap \mathcal{C}^{\perp}=\{0\}$. Now, by Lemma \ref{lemma lcd gray}, $\varphi(\mathcal{C})\cap \varphi(\mathcal{C})^{\perp}=\varphi(\mathcal{C}\cap \mathcal{C}^{\perp})=\varphi(\{0\})=\{0\}$. This implies that $\varphi(\mathcal{C})$ is an LCD code of length $en$ over $\mathbb{F}_{q}$.

Conversely, let $\varphi(\mathcal{C})$ be an LCD code of length $en$ over $\mathbb{F}_{q}$. Then $\varphi(\mathcal{C})\cap \varphi(\mathcal{C})^{\perp}=\{0\}$. Also, by Lemma \ref{lemma lcd gray}, we have $\varphi(\mathcal{C}\cap \mathcal{C}^{\perp})=\varphi(\mathcal{C})\cap \varphi(\mathcal{C})^{\perp}=\{0\}$. Since $\varphi$ is injective, we have $\mathcal{C}\cap \mathcal{C}^{\perp}=\{0\}$. Hence, $\mathcal{C}$ is an LCD code of length $n$ over $\mathcal{R}$.
\end{proof}
\subsection{Existence of non-free LCD codes over $R_{e,q}$}
{ In \cite{Bhowmick20},} the authors proved that there does not exist any non-free LCD codes over finite commutative local Frobenius ring, or in other words, every LCD code over commutative local { Frobenius ring is free. Also, the authors gave the conditions in case} of a non-local ring for a code to be LCD. Here we show that both free and non-free LCD cyclic codes exist over the commutative semi-local Frobenius ring $R_{e,q}$ in terms of the degree of the polynomials defining them. To do so we use the following result.
\begin{lemma}\label{idem element}
The element $\sum_{i\in T}\mu_i$ is a zero divisor in $R_{e,q}$, for any non-trivial subset $T$ of $\{1,2,\dots,e\}$.
\end{lemma}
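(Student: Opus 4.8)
The plan is to exhibit an explicit nonzero element that annihilates $\sum_{i\in T}\mu_i$. Recall from Lemma~\ref{crt} that $\{\mu_1,\dots,\mu_e\}$ is a complete set of orthogonal idempotents with $\sum_{i=1}^e\mu_i=1$. Write $T'=\{1,2,\dots,e\}\setminus T$, which is nonempty because $T$ is a proper subset. Set $\delta_T=\sum_{i\in T}\mu_i$ and $\delta_{T'}=\sum_{j\in T'}\mu_j$.

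First I would observe that $\delta_T+\delta_{T'}=\sum_{i=1}^e\mu_i=1$, so in particular $\delta_{T'}=1-\delta_T\neq 0$ (it is not zero since, e.g., applying the canonical projection $R_{e,q}\to\mu_j\mathbb{F}_q\cong\mathbb{F}_q$ for a fixed $j\in T'$ sends $\delta_{T'}$ to $1$). Next I would compute the product using orthogonality of the idempotents:
\begin{align*}
\delta_T\,\delta_{T'}=\Bigl(\sum_{i\in T}\mu_i\Bigr)\Bigl(\sum_{j\in T'}\mu_j\Bigr)=\sum_{i\in T}\sum_{j\in T'}\mu_i\mu_j=0,
\end{align*}
because every pair $(i,j)$ appearing has $i\in T$, $j\in T'$, hence $i\neq j$, so $\mu_i\mu_j=0$ by Lemma~\ref{crt}. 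Since $T$ is also nonempty (it is non-trivial), the same argument shows $\delta_T\neq 0$. Thus $\delta_T$ and $\delta_{T'}$ are two nonzero elements of $R_{e,q}$ whose product vanishes, which is exactly the statement that $\sum_{i\in T}\mu_i$ is a zero divisor.

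There is essentially no obstacle here: the entire content is bookkeeping with the orthogonal idempotent decomposition already established in Lemma~\ref{crt} and the Chinese Remainder Theorem decomposition $R_{e,q}\cong\bigoplus_{i=1}^e\mu_i\mathbb{F}_q$. The only point that deserves a sentence of care is verifying that both $\delta_T$ and $\delta_{T'}$ are genuinely nonzero, and this is immediate from looking at their images in a single coordinate of the CRT decomposition — a nonempty sub-sum of the $\mu_i$ has image $1$ in the $\mathbb{F}_q$-factors indexed by that subset and $0$ elsewhere, so it cannot be zero unless the index set is empty. I would phrase the final proof in three short lines accordingly.
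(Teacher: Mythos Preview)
Your proposal is correct and takes essentially the same approach as the paper: both exhibit the complementary idempotent $1-\sum_{i\in T}\mu_i$ as the annihilator. The paper arrives at it by writing $\prod_{i\in T}(1-\mu_i)$ and simplifying via orthogonality to $1-\sum_{i\in T}\mu_i$, whereas you write it directly as $\sum_{j\in T'}\mu_j$; your version is slightly more careful in verifying that both factors are nonzero via the CRT projections.
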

\begin{proof}
Since $\mu_i(1-\mu_i)=0$, for $1\leq i\leq e$, then $\sum_{i\in T}\mu_i\prod_{i\in T}(1-\mu_i)=0$, for any non-trivial subset $T$ of $\{1,2,\dots,e\}$. Also, as $\mu_i\mu_j=0$, for $i\neq j$, then $\prod_{i\in T}(1-\mu_i)=1-(\sum_{i\in T}\mu_i)\neq 0$. Hence the result.
\end{proof}
As we { discussed} in the previous section a cyclic code of length $n$ is given by $\mathcal{C}=\langle g(x)\rangle$ where $g(x)=\sum_{i=1}^e\mu_ig_i(x)$. Since $g_i(x)$ is { a} monic polynomial in $\mathbb{F}_q[x]$, $g(x)$ is monic if the degrees of all polynomials $g_i(x)$, for $1\leq i\leq e$ are same. Also, if the degrees of all polynomials $g_i(x)$ are not same, then by Lemma \ref{idem element}, $g(x)$ is a polynomial having a zero divisor leading coefficient. Therefore, $g(x)$ is { a} non-monic polynomial. Now, for the case when $g(x)$ is monic, by using { the} division algorithm on $R_{e,q}[x]$ and { a} similar argument of Theorem 4 in \cite{Sharma17}, we conclude that $\mathcal{C}$ is free. On the other hand, when $g(x)$ is non-monic, if possible, { let us assume that $\mathcal{C}$ is free.} Then by similar argument of Theorem 5 in \cite{Sharma17}, we have a monic polynomial $g'(x)\in R_{e,q}[x]$ such that $\mathcal{C}=\langle g'(x)\rangle$ and $g'(x)\mid x^n-1$. Since $g(x)$ is non-monic, this contradicts the fact that $g(x)$ is the unique polynomial in $R_{e,q}[x]$ such that $\mathcal{C}=\langle g(x)\rangle$ and $g(x)\mid x^n-1$. Therefore, under above discussion, we have the following results.
\begin{theorem}\label{thm free}
Let $\mathcal{C}=\langle g(x)\rangle$ be a cyclic code of length $n$ over $R_{e,q}$ where $g(x)=\sum_{i=1}^e\mu_ig_i(x)$.
\begin{enumerate}
    \item $\mathcal{C}$ is free if the degrees of $g_1(x),g_2(x),\dots,g_e(x)$ are same.
    \item $\mathcal{C}$ is non-free if the degrees of $g_1(x),g_2(x),\dots,g_e(x)$ are not same.
\end{enumerate}
\end{theorem}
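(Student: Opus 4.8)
The plan is to leverage the direct sum decomposition $\mathcal{C}=\bigoplus_{i=1}^e\mu_i\mathcal{C}_i$ together with the generator polynomial $g(x)=\sum_{i=1}^e\mu_ig_i(x)$ from Theorem~\ref{thm dec gen 1}, and to reduce freeness of $\mathcal{C}$ as an $R_{e,q}$-module to the shape of the leading coefficient of $g(x)$. First I would observe that the leading coefficient of $g(x)$ is governed by the set $T=\{i : \deg g_i(x)=d\}$ where $d=\max_i\deg g_i(x)$: the coefficient of $x^d$ in $g(x)$ is exactly $\sum_{i\in T}\mu_i$ (each $g_i$ being monic). If all the $\deg g_i(x)$ coincide, then $T=\{1,2,\dots,e\}$ and $\sum_{i=1}^e\mu_i=1$ by Lemma~\ref{crt}, so $g(x)$ is monic; otherwise $T$ is a proper nonempty subset and $\sum_{i\in T}\mu_i$ is a zero divisor by Lemma~\ref{idem element}, so $g(x)$ is non-monic with a zero-divisor leading coefficient.

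For part~1, with $g(x)$ monic of degree $d$ I would run the division algorithm in $R_{e,q}[x]$: since a monic divisor permits division with remainder, every class in $R_{e,q}[x]/\langle x^n-1\rangle$ lying in $\mathcal{C}=\langle g(x)\rangle$ is represented by $a(x)g(x)$ with $\deg(a(x)g(x))<n$, and one checks that $\{g(x),xg(x),\dots,x^{n-d-1}g(x)\}$ is an $R_{e,q}$-basis of $\mathcal{C}$ — spanning is immediate, and $R_{e,q}$-linear independence follows because a nontrivial relation would produce a polynomial in $\mathcal{C}$ of degree $<n$ that is a multiple of the monic $g(x)$ of degree $d$ yet has degree $<d$, forcing all coefficients to vanish (here I invoke the argument of Theorem~4 of \cite{Sharma17} verbatim, just over $R_{e,q}$ rather than over the ring treated there). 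Hence $\mathcal{C}\cong R_{e,q}^{\,n-d}$ is free of rank $n-d$.

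For part~2, I would argue by contradiction. Suppose $\mathcal{C}$ is free although the $\deg g_i(x)$ are not all equal, so $g(x)$ is non-monic. By the analogue of Theorem~5 of \cite{Sharma17}, a free cyclic code over $R_{e,q}$ admits a \emph{monic} generator $g'(x)$ with $g'(x)\mid x^n-1$ and $\mathcal{C}=\langle g'(x)\rangle$. But Theorem~\ref{thm dec gen 1} already supplies a generator $g(x)=\sum_{i=1}^e\mu_ig_i(x)$ with $g(x)\mid x^n-1$, and — this is the point that needs care — this generator is \emph{unique} among divisors of $x^n-1$ in $R_{e,q}[x]$: projecting modulo each idempotent $\mu_i$ recovers the unique monic generator $g_i(x)$ of $\mathcal{C}_i\subseteq\mathbb{F}_q[x]/\langle x^n-1\rangle$ from Theorem~\ref{thm gen cyclic field}, so any divisor-of-$x^n-1$ generator of $\mathcal{C}$ must equal $\sum_{i=1}^e\mu_ig_i(x)$. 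Thus $g'(x)=g(x)$, contradicting that $g(x)$ is non-monic while $g'(x)$ is monic. Therefore $\mathcal{C}$ is non-free.

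The main obstacle is the uniqueness step in part~2: one must be sure that the generator of $\mathcal{C}$ that divides $x^n-1$ in $R_{e,q}[x]$ is genuinely unique, so that the hypothetical monic $g'(x)$ of the freeness argument is forced to coincide with $g(x)$. This hinges on pulling the decomposition $\mathcal{C}=\bigoplus_{i=1}^e\mu_i\mathcal{C}_i$ through the componentwise uniqueness of generators over $\mathbb{F}_q$ (Theorem~\ref{thm gen cyclic field}) and on knowing that $\mu_i g'(x)$ reduces to $g_i(x)$ up to a unit; once that bookkeeping is in place, everything else is the routine division-algorithm argument of \cite{Sharma17} transplanted to $R_{e,q}[x]$.
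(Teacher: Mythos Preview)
Your proposal is correct and follows essentially the same route as the paper: analyze the leading coefficient of $g(x)=\sum_{i=1}^e\mu_i g_i(x)$ via Lemmas~\ref{crt} and~\ref{idem element}, then invoke the division-algorithm argument of Theorem~4 in \cite{Sharma17} for the monic case and the contradiction via Theorem~5 of \cite{Sharma17} together with the uniqueness in Theorem~\ref{thm dec gen 1} for the non-monic case. Your write-up is in fact more explicit than the paper's on the uniqueness step in part~2 (the paper simply asserts it), so your added bookkeeping about projecting onto components to recover the $g_i(x)$ is a welcome clarification rather than a departure.
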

Unlike commutative local Frobenius ring, both free and non-free LCD codes exist over $R_{e,q}$ and { can be constructed} by following Corollary \ref{cor lcd}, Corollary \ref{cor rev} and Theorem \ref{thm free}. Now, we discuss some examples of non-free LCD codes over $R_{e,q}$ which present many optimal LCD codes over $\mathbb{F}_q$ as the Gray images of these codes.
\begin{example}
Let $e=2,q=2\cdot 2+1=5,n=6$ and $R_{2,5}=\mathbb{F}_5[u]/\langle u^2-1\rangle$. Let $\mathcal{C}$ be a cyclic code of length $6$ over $R_{2,5}$. Now,
\begin{align*}
    x^6-1= (x + 1)(x + 4)(x^2 + x + 1)(x^2 + 4x + 1)\in \mathbb{F}_5[x].
\end{align*}
Let $g_1(x)=x+4$ and $g_2(x)=(x + 1)(x^2 + x + 1)=x^3 + 2x^2 + 2x + 1$. Since $g_1(x),g_2(x)$ are self-reciprocal polynomials, by Corollary \ref{cor rev} and Theorem \ref{thm free}, we have $\mathcal{C}=\langle \mu_1g_1(x)+\mu_2g_2(x)\rangle$ is a non-free LCD code of length $6$ over $R_{2,5}$. Let \[
   M=
  \left[ {\begin{array}{ccc}
   1 & 4 \\
   1 & 1
  \end{array} } \right]\in GL_2(\mathbb{F}_5),
\]
satisfying $MM^T=2I_2$. Then by Theorem \ref{thm gr lcd}, the Gray image $\varphi(\mathcal{C})$ is an LCD code with { parameters} $[12,8,4]$ and it is an optimal code according to \cite{Grassltable}.
\end{example}{}

\begin{example}
Let $e=3,q=3\cdot 2+1=7,n=3$ and $R_{3,7}=\mathbb{F}_7[u]/\langle u^3-1\rangle$. Now,
\begin{align*}
    x^3-1=(x + 3)(x + 5)(x + 6)\in \mathbb{F}_7[x].
\end{align*}
Let $g_1(x)=x + 6, g_2(x)=1$ and $g_3(x)=(x+2)(x+3)=x^2 + x + 1$. Then $\mathcal{C}=\langle \sum_{i=1}^3\mu_ig_i(x)\rangle$ is a cyclic code of length $3$ over $R_{3,7}$. Since $\gcd(4,7)=1$ and $g_i(x)$ is self-reciprocal polynomial for $i=1,2$, by Corollary \ref{cor rev} and Theorem \ref{thm free}, we say that $\mathcal{C}$ is a non-free LCD code. Again, let \[
   M=
  \left[ {\begin{array}{cccc}
   2 & 1 & 2 \\
   5 & 2 & 1 \\
   1 & 2 & 5 \\
  \end{array} } \right]\in GL_3(\mathbb{F}_7),
\]
satisfying $MM^T=2I_3$. Then by Theorem \ref{thm gr lcd}, the Gray image $\varphi(\mathcal{C})$ is an LCD code with { parameters} $[9,6,3]$, which is an optimal code by \cite{Grassltable}.
\end{example}

\begin{example}
Let $\mathcal{C}=\langle \mu_1g_1(x)+\mu_2g_2(x)\rangle$ be a non-free LCD code of length $n$ over $R_{2,q}$ where $\gcd(n,q)=1$ and $g_1(x),g_2(x)$ are self-reciprocal factors of $x^n-1$. Let \[
   M=
  \left[ {\begin{array}{ccc}
   1 & -1 \\
   1 & 1
  \end{array} } \right]\in GL_2(\mathbb{F}_q),
\]
satisfying $MM^T=2I_2$. Again, by Theorem \ref{thm gr lcd}, $\varphi(\mathcal{C})$ is an LCD code. In Table \ref{tab3}, we present generator polynomials $g_1(x),g_2(x)$ in { the} $3^{rd}$ and $4^{th}$ column, respectively. Also, their $\mathbb{F}_q$-images are included in { the} $5^{th}$ column which presents many optimal codes as per \cite{Grassltable}.
\end{example}{}

\begin{remark}
\begin{enumerate}
    \item In order to present Tables \ref{tab1}, \ref{tab2} { and} \ref{tab3} briefly, we write the vector consisting coefficients of the polynomial $g_i(x)$ in decreasing order. For example, we write the vector $(1,3,0,2,4)$ to represent the polynomial $x^4+3x^3+2x+4$.
    \item As we can see in Tables \ref{tab1} { and} \ref{tab3}, the codes { of parameters} $[16,11,4],[20,15,4],[24,19,4],[32,27,4],[12,8,3],$ $[12,7,4],[16,12,3],[26,22,3],[50,45,4],46,21,4]$ satisfy $n-k+1-d=2$ while { the} remaining codes satisfy $n-k+1-d=1$. Therefore, although these codes are not MDS (with respect to the singleton bound) { they are very close to be MDS, having good parameters.}
\end{enumerate}
\end{remark}

\section{Conclusion}
Here, we have obtained many MDS and optimal codes from the Gray images of both cyclic and LCD codes over $R_{e,q}$. To compute the parameters of codes, we have used the Magma computation system \cite{Magma}. Apart from the obtained parameters, one can further find more optimal codes using the derived results and { the} Gray map $\varphi$. The article justifies the fact that LCD codes over non-chain rings are a useful tool to obtain optimal LCD codes over $\mathbb{F}_q$. { We} hope our work would motivate researchers to study LCD codes over other non-chain rings to explore more good LCD codes in { the} future.

\section*{Acknowledgement}
The authors are thankful to the University Grants Commission (UGC), Govt. of India for financial supports under Ref. No. 20/12/2015(ii)EU-V dated 31/08/2016 and Indian Institute of Technology Patna for providing research facilities. E. Mart\'inez-Moro is partially funded by Spanish State Research Agency (AEI) under Grant PGC2018-096446-B-C21.\\
Also,  authors  would like to thank the anonymous referee(s) and the Editor for their valuable comments to improve the presentation of the manuscript.

%\begin{landscape}

\begin{table}
\caption{Some decomposition $\mu_i$ and their canonical components $s_i$}
\vspace{0.5cm}
\renewcommand{\arraystretch}{1.5}
\begin{center}

\begin{tabular}{|c|c|c|c|c|c|c|c|c|c|c||c|}
%\multicolumn{5}{c}{Table A: List of decomposition $\mu_i$, and units $\lambda_i$ for different $p$} \\
\hline
$e$& $q$ & $r$  & $\mu_i$ & $s_i$  \\
\hline
$2$ & $q$ & $ a_0+ua_1$ &$\mu_1=\frac{1+u}{2},\mu_2=\frac{1-u}{2}$ & $s_1=a_0+a_1,s_2=a_0-a_1$ \\
\hline

  &   &  &$\mu_1=6(u+5)(u+6)$ & $s_1=a_0+4a_1+2a_2$ \\
 3 &  7 & $ a_0+ua_1+u^2a_2$  &$\mu_2=3(u+3)(u+6)$ & $s_2=a_0+2a_1+4a_2$ \\
  &   &  &$ \mu_3=5(u+3)(u+5)$ & $ s_3=a_0+a_1+a_2$ \\
\hline

  &  &  &$\mu_1=3(u+10)(u+12)$ & $s_1=a_0+9a_1+3a_2$ \\
3  & 13  & $ a_0+ua_1+u^2a_2$  &$\mu_2=(u+4)(u+12)$ & $s_2=a_0+3a_1+9a_2$ \\
  &   &   &$ \mu_3=9(u+4)(u+10)$ & $ s_3=a_0+a_1+a_2$ \\
\hline

 &  &  &$\mu_1=(u+2)(u+3)(u+4)$ & $s_1=a_0+4a_1+a_2+4a_3$ \\
 4 & 5  &  $ a_0+ua_1+u^2a_2+u^3a_3$ &$\mu_2=2(u+1)(u+3)(u+4)$ & $s_2=a_0+3a_1+4a_2+2a_3$ \\
  &   &    &$ \mu_3=3(u+1)(u+2)(u+4)$ & $s_3=a_0+2a_1+4a_2+3a_3$ \\
    &   &    &$ \mu_4=4(u+1)(u+2)(u+3)$ & $s_4=a_0+a_1+a_2+a_3$ \\
\hline
  &   &  &$\mu_1=3(u+5)(u+8)(u+12) $ & $s_1=a_0+12a_1+a_2+12a_3 $ \\
 4 & 13  & $ a_0+ua_1+u^2a_2+u^3a_3$  &$ \mu_2=2(u+1)(u+8)(u+12)$ & $ s_2=a_0+8a_1+12a_2+5a_3$ \\
  &   &   &$ \mu_3=11(u+1)(u+5)(u+12)$ & $ s_3=a_0+5a_1+12a_2+8a_3$ \\
    &   &   &$ \mu_4=10(u+1)(u+5)(u+8)$ & $ s_4=a_0+a_1+a_2+a_3$ \\
\hline
 &  &  &$\mu_1=4(u+4)(u+13)(u+16) $ & $s_1=a_0+16a_1+a_2+16a_3, $ \\
 4 & 17  & $ a_0+ua_1+u^2a_2+u^3a_3$  &$ \mu_2=16(u+1)(u+13)(u+16)$ & $ s_2=a_0+13a_1+16a_2+4a_2$ \\
  &   &   &$ \mu_3=(u+1)(u+4)(u+16)$ & $ s_3=a_0+4a_1+16a_2+13a_3$ \\
    &   &   &$ \mu_4=13(u+1)(u+4)(u+13)$ & $ s_4=a_0+a_1+a_2+a_3$ \\
\hline

\end{tabular}\label{tabA}
\end{center}
\end{table}

%\end{landscape}

%\begin{landscape}

\begin{table}
\caption{Optimal linear as the Gray images of cyclic codes over $R_{2,q}=\mathbb{F}_q[u]/\langle u^2-1\rangle$}
\vspace{0.5cm}
\renewcommand{\arraystretch}{1.5}
\begin{center}
\begin{tabular}{|c|c|c|c|c|c|c|c|c|c|c|c|}
%{p{1cm} p{.5cm} p{4cm} p{4cm} p{1.5cm} p{1.5cm}}

%\multicolumn{6}{c}{Table 1: Linear codes as Gray images of $\lambda$-constacyclic codes} \\\\
\hline
$q$ &$n$ & $g_1(x)$ & $g_2(x)$ & $[n,k,d]$ & Remark\\
\hline
$5$ &$4$ & $(1,4)$ & $(1,2,2)$ & $[8,5,3]$ & Optimal\\

$5$ &$4$ & $(1,4,3)$ & $(1,2,2)$ & $[8,4,4]$ & Optimal\\

$5$ &$4$ & $(1,3,4,2)$ & $(1,4,1,4)$ & $[8,2,6]$ & Optimal\\

$5$ &$5$ & $(1,3,1)$ & $(1,4)$ & $[10,7,3]$ & Optimal\\

$5$ &$5$ & $(1,4)$ & $(1,2,3,4)$ & $[10,6,4]$ & Optimal\\

$5$ &$8$ & $(1,3,2,1)$ & $(1,4,3)$ & $[16,11,4]$ & Optimal\\

$5$ &$10$ & $(1,4)$ & $(1,3,0,2,4)$ & $[20,15,4]$ & Optimal\\

$7$ &$6$ & $(1,5,6)$ & $(1,2)$ & $[12,9,3]$ & Optimal\\

$7$ &$7$ & $(1,5,1)$ & $(1,6)$ & $[14,11,3]$ & Optimal\\

$7$ &$7$ & $(1,4,3,6)$ & $(1,6)$ & $[14,10,4]$ & Optimal\\

$7$ &$8$ & $(1,2,5,6)$ & $(1,6)$ & $[16,12,4]$ & Optimal\\

$7$ &$12$ & $(1,6)$ & $(1,5,0,5,6)$ & $[24,19,4]$ & BKLC\\

$7$ &$14$ & $(1,5,0,2,6)$ & $(1,6)$ & $[28,24,4]$ & BKLC\\

$7$ &$14$ & $(1,6)$ & $(1,3,5,4,1)$ & $[32,27,4]$ & Optimal\\

$3$ &$3$ & $(1,1,1)$ & $(1,2)$ & $[6,3,3]$ & Optimal\\

$3$ &$6$ & $(1,2,2,1)$ & $(1,2)$ & $[12,8,3]$ & Optimal\\

$3$ &$6$ & $(1,1,0,2,2)$ & $(1,2)$ & $[12,7,4]$ & Optimal\\

$3$ &$8$ & $(1,2)$ & $(1,1,0,1,2)$ & $[16,11,4]$ & Optimal\\

$3$ &$8$ & $(1,2)$ & $(1,2,0,2)$ & $[16,12,3]$ & Optimal\\

$3$ &$13$ & $(1,0,2,2)$ & $(1,2)$ & $[26,22,3]$ & Optimal\\

$9$ &$8$ & $(1,w,w^5,2)$ & $(1,w^6)$ & $[16,12,4]$ & Optimal\\

$9$ &$10$ & $(1,w^2,w^6,2)$ & $(1,2)$ & $[20,16,4]$ & Optimal\\

\hline
\end{tabular}\label{tab1}
\end{center}
\end{table}

%\end{landscape}

%\begin{landscape}

\begin{table}
\caption{linear MDS codes as the Gray images of cyclic codes over $R_{2,q}=\mathbb{F}_q[u]/\langle u^2-1\rangle$}
\vspace{0.5cm}
\renewcommand{\arraystretch}{1.5}
\begin{center}
\begin{tabular}{|c|c|c|c|c|c|c|c|c|c|c|c|}
%{p{1cm} p{.5cm} p{4cm} p{4cm} p{1.5cm} p{1.5cm}}

%\multicolumn{6}{c}{Table 1: Linear codes as Gray images of $\lambda$-constacyclic codes} \\\\
\hline
$q$ &$n$ & $g_1(x)$ & $g_2(x)$ & $[n,k,d]$ & Remark\\
\hline
$7$ &$3$ & $(1,3)$ & $(1,5)$ & $[6,4,3]$ & MDS\\

$7$ &$3$ & $(1,3)$ & $(1,4,2)$ & $[6,3,4]$ & MDS\\

$7$ &$3$ & $(1,1,1)$ & $(1,4,2)$ & $[6,2,5]$ & MDS\\

$11$ &$5$ & $(1,6)$ & $(1,8)$ & $[10,8,3]$ & MDS\\

$11$ &$5$ & $(1,7,3)$ & $(1,2,9)$ & $[10,6,5]$ & MDS\\

$19$ &$9$ & $(1,3)$ & $(1,8)$ & $[18,6,3]$ & MDS\\

$23$ &$11$ & $(1,5)$ & $(1,7)$ & $[22,20,3]$ & MDS\\

\hline
\end{tabular}\label{tab2}
\end{center}
\end{table}
%
%\end{landscape}

%\begin{landscape}

\begin{table}
\caption{Optimal LCD codes as the Gray images of non-free LCD codes over $R_{2,q}=\mathbb{F}_q[u]/\langle u^2-1\rangle$}
\vspace{0.5cm}
\renewcommand{\arraystretch}{1.5}
\begin{center}
\begin{tabular}{|c|c|c|c|c|c|c|c|c|c|c|c|}
%{p{1cm} p{.5cm} p{4cm} p{4cm} p{1.5cm} p{1.5cm}}

%\multicolumn{6}{c}{Table 1: Linear codes as Gray images of $\lambda$-constacyclic codes} \\\\
\hline
$q$ &$n$ & $g_1(x)$ & $g_2(x)$ & $[n,k,d]$ & Remark\\
\hline
$5$ &$3$ & $(1,1,1)$ & $(1,4)$ & $[6,3,4]$ & MDS\\

$5$ &$6$ & $(1,4)$ & $(1,2,2,1)$ & $[12,8,4]$ & Optimal\\

$5$ &$12$ & $(1,1,2,1,1)$ & $(1,4)$ & $[24,19,4]$ & Optimal\\

$5$ &$24$ & $(1,2,4,4,2,1)$ & $(1,4)$ & $[48,42,4]$ & Optimal\\

$7$ &$3$ & $(1,6)$ & $(1,1,1)$ & $[6,3,4]$ & MDS\\

$7$ &$4$ & $(1,5,2,6)$ & $(1,1)$ & $[12,8,4]$ & Optimal\\

$7$ &$8$ & $(1,4,4,1)$ & $(1,1)$ & $[16,12,4]$ & Optimal\\

$7$ &$25$ & $(1,6)$ & $(1,2,4,2,1)$ & $[50,45,4]$ & Optimal\\

$5$ &$13$ & $(1,1,4,1,1)$ & $(1,4)$ & $[26,21,4]$ & Optimal\\

$3$ &$4$ & $(1,2,1,2)$ & $(1,2)$ & $[8,4,4]$ & Optimal\\

\hline
\end{tabular}\label{tab3}
\end{center}
\end{table}
%
%\end{landscape}

\newpage

\end{document}